\documentclass[letterpaper, DIV=16]{scrartcl}
\usepackage{amssymb, amsthm, mathtools,bbm,bm}
\usepackage[square,sort,comma,numbers]{natbib}
\usepackage{comment}

\usepackage[none]{hyphenat}
\usepackage[english]{babel}
\usepackage{enumerate}
\usepackage{amsmath}
\usepackage{amsthm}
\usepackage{amssymb}
\usepackage{dsfont}%
\usepackage{hyperref}
\usepackage{amssymb}
\usepackage{color}
\usepackage[utf8]{inputenc}
\usepackage{graphicx}
\usepackage{tikz}
\usepackage{tikz-cd}
\usepackage{verbatim} 
	\usepackage{mathtools} 
	\usepackage{mathabx} 
	\usepackage{hyperref}
	\usepackage{comment}
	\usetikzlibrary{decorations.markings}
	\usepackage{bbm} 
	\usepackage{xfrac} 
	\usepackage{tensor} 
	\usepackage{listings}  
	\usepackage{matlab-prettifier} 
	
	\usepackage{pgfplots}              
	\pgfplotsset{compat=newest}
	\usepgfplotslibrary{fillbetween}

	\usepackage[title]{appendix}

\usepackage[shortcuts]{extdash}  
	
	\newtheorem{theorem}{Theorem}[section]  
	\newtheorem{cor}[theorem]{Corollary}
	\newtheorem{prop}[theorem]{Proposition}
	\newtheorem{lemma}[theorem]{Lemma}

	\numberwithin{equation}{section}
	\theoremstyle{definition}
	\newtheorem{defn}{Definition}[section]
	\newtheorem{rem}[theorem]{Remark}
	\newtheorem{assumption}[theorem]{Assumption}

\begin{document}
		\title{Mermin-Wagner theorems for quantum systems\\ with multipole symmetries}
		
		\author{ Timo Feistl$^1$, Severin Schraven$^2$, Simone Warzel$^{2,3,4}$ \\
			\\
            \small $^1$ Department of Mathematics, Friedrich-Alexander-University, Erlangen, Germany \\[-.5ex]
			\small $^2$ Department of Mathematics, TU Munich, Garching, Germany \\[-.5ex]
			\small $^3$ Munich Center for Quantum Science and Technology, Munich, Germany \\[-.5ex]
			\small $^4$ Department of Physics, TU Munich, Garching, Germany}

        \date{\small \today \\[-.5cm]}
		\maketitle

		\begin{abstract}
			We prove Mermin-Wagner-type theorems for quantum lattice systems in the presence of  multipole symmetries. These theorems show that the presence of higher-order symmetries protects against the breaking of lower-order ones. In particular, we prove that the critical dimension in which the charge symmetry can be broken increases if the system admits higher multipole symmetries, e.g. $ d = 4 $ on the regular lattice $ \mathbb{Z}^d $ in the presence of dipole symmetry. \\[.5ex]
            \noindent
            \textbf{Keywords:} quantum lattice systems, absence of continuous symmetry breaking
		\end{abstract}

		\section{Introduction} \label{sect:Intro}

        The Mermin-Wagner theorem is a milestone in statistical mechanics. It proves that at positive temperature, continuous symmetries cannot be spontaneously broken in spatial dimensions $d\leq 2$. Early manifestations of this principle were established by Mermin and Wagner \cite{MerminWagnerHeisenbergModel} for the rotational symmetry of the isotropic Heisenberg model, and by Hohenberg \cite{hohenberg1967existence} in the context of superfluids. Since then, the theorem has been extended and adapted to many different physical settings and symmetry types~\cite{PfisterPaper, 1981JSP....26..755LLandauPerezGoldstone, KleinMerminWagner, FannesEntropySSB,Simon+1993}. 
        
        In recent years, renewed interest has focused on how additional symmetries modify the dimensional constraints on symmetry breaking. In particular, it has become clear that the coexistence of conventional charge conservation with so-called multipole symmetries can substantially alter the critical dimension at which symmetry breaking becomes possible. Recent work \cite{Stahl2022FieldTheoryMultipole, PhysRevB.106.245125Kapustin} has shown that higher multipole symmetries can protect charge symmetry and thereby raise the dimension required for its spontaneous breaking. The purpose of this work is to prove that this mechanism is not specific to charge conservation: rather, higher-order multipole symmetries generically protect lower-order ones. Moreover, we also eliminate  some additional assumptions required in the proof strategy of the Mermin-Wagner results in~\cite{PhysRevB.106.245125Kapustin}. 

        Multipole symmetries are not merely of abstract interest but arise naturally in several physically relevant systems. 
        One class of examples are fractional quantum Hall models, such as those based on Haldane pseudopotentials \cite{PhysRevLett.51.605HaldaneFQHE, PhysRevB.88.165303SeidlFQHE, PhysRevX.5.041003LeeGeometricConstruction} or their truncation \cite{nachtergaele2021spectral,warzel2023bulk}. These exhibit both charge and dipole conservation, which, in this case, stems from the conservation of an appropriate angular momentum. Lattice models with dipole symmetries have also started to become directly experimentally accessible via cold atoms in tilted optical lattices \cite{PhysRevX.10.011042SubdiffusionTiltedFermiHubbard, Scherg:2020mcp}. 
        Aside from a change in the critical dimension for symmetry breaking, on which we focus in this paper, 
        the presence of such a dipole symmetry causes other interesting phenomena, such as the appearance of immobile, low-energy excitations known as fractons, of many-body scars high up in the energy spectrum, as well as the change of hydrodynamic laws, see e.g.~\cite{Pollmann,Pretko:2020aa,Morningstar:2020aa,Bernevig10.19,Gorantla:2022aa,Feldmeier:2020aa,PhysRevLett.132.137102,PhysRevB.107.195131,HHY34,ZSPK24}.

		\subsection{Setting} \label{sect:setting}
		
		We formulate our results in the standard setting for quantum 'lattice' systems. Instead of a regular lattice, we work on a countable metric space $(\mathcal{L}, D)$, which is embedded in $ \mathbb{R}^d $ and whose volume growth defines an effective dimension. 
        \begin{assumption} (Base space)
			\label{as: lattice}
			Let $\mathcal{L}$ be a countable subset of $\mathbb{R}^d$ for some spatial dimension $d\in\mathbb{N}$ and  $D$ be the metric on $\mathcal{L}$ induced by the standard Euclidean metric $\vert \cdot \vert$ on $\mathbb{R}^d$. We assume that there exist constants $\gamma>0$, $C\geq 0$ and $r_0\geq 0$ such that for balls $ B_r(z) \coloneqq \{ x \in \mathbb{R}^d \ \vert \ |x-z| < r \} $
			\begin{equation}
				\label{eq: geometric dimension condition}
				\vert B_r(z)\cap\mathcal{L}\vert\leq C  r^\gamma \qquad \text{for all } r\geq r_0 \text{ and all } z\in\mathcal{L}.
			\end{equation}
            We refer to $ \gamma $ as the effective dimension of $\mathcal{L} $. 
		\end{assumption}
        The most important example will be $\mathcal{L} =\mathbb{Z}^d $ for which $ \gamma = d $. \\
        The set of finite, nonempty subsets of $\mathcal{L}$ will be denoted by $P_0(\mathcal{L})$. To each site $x\in \mathcal{L}$, we associate a finite-dimensional complex Hilbert space $\mathcal{H}_x$. The quantum mechanics of the degrees of freedom supported on $\Lambda\in P_0(\mathcal{L})$ is then described by the $C^*$-algebra $\mathcal{U}_\Lambda$ of bounded linear operators $ L(\mathcal{H}_\Lambda) $ on the Hilbert-space tensor product $\mathcal{H}_\Lambda=\bigotimes_{x\in \Lambda}\mathcal{H}_x$. \\
		  For any nested pair $\Lambda_1,\Lambda_2\in P_0(\mathcal{L})$, $\Lambda_1\subseteq\Lambda_2$, the map 
		\begin{equation}
			\label{eq: natural inclusion}
			\mathcal{U}_{\Lambda_1}\ni A\mapsto A\otimes\bigl(\bigotimes_{x\in\Lambda_2\setminus\Lambda_1}\mathbb{I}_{\mathcal{H}_x}\bigr)\in \mathcal{U}_{\Lambda_2}
		\end{equation}
		provides a natural identification under which $\mathcal{U}_{\Lambda_1}\subseteq\mathcal{U}_{\Lambda_2}$. 
		The algebra of local operators is the following union respecting inclusions of the form \eqref{eq: natural inclusion} 
		\begin{equation}
			\mathcal{U}_{\mathrm{loc}}\coloneqq\bigcup_{\Lambda\in P_0(\mathcal{L})}\mathcal{U}_{\Lambda}.
		\end{equation}
		When equipping $\mathcal{U}_{\mathrm{loc}}$ with the norm $ \| \cdot \| $ inherited from the operator norm on $L(\mathcal{H}_\Lambda)$ for $\Lambda\in P_0(\mathcal{L})$, 
		the norm completion of $\mathcal{U}_{\mathrm{loc}}$ is the $C^*$-algebra $\mathcal{U}$ of quasi-local operators on $(\mathcal{L},D)$. 
		The unit of $\mathcal{U}$ is denoted by $\mathbb{I}$. For not necessarily finite subsets $X\subseteq \mathcal{L}$, we also define $\mathcal{U}_X$ to be the norm closure of $\bigcup_{\substack{\Lambda\in P_0(\mathcal{L})},\Lambda\subseteq X}\mathcal{U}_\Lambda$.
		For a more detailed exposition of algebras of quasi-local operators as well as proofs of the relevant properties, see \cite[Chapter 6.2]{BratelliRobinson2} or \cite[Chapter II.1]{Simon+1993}. 
		
		The possible physical states of the system are modeled by states of the $C^*$-algebra $\mathcal{U}$, i.e., by positive linear functionals $\omega:\mathcal{U}\rightarrow\mathbb{C}$ with $\omega(\mathbb{I})=1$. Such functionals are automatically continuous.
	
		A time-evolution on $\mathcal{U}$ is a strongly continuous 1-parameter group of $*$-automorphisms $\alpha=(\alpha_t)_{t\in\mathbb{R}}$. Given a countable metric space $(\mathcal{L},D)$ with its associated $C^*$-algebra of quasi-local operators $\mathcal{U}$, we will call a family $\Phi=\bigl(\phi(\Lambda)\bigr)_{\Lambda\in P_0(\mathcal{L})}$ an interaction if all $\phi(\Lambda)$ are self-adjoint elements of $\mathcal{U}_\Lambda$. We consider time-evolutions $\alpha$ which are  induced by an interaction $\Phi$ in such a way that $\phi(\Lambda)$ can be interpreted as a local Hamiltonian. Informally, this is the Heisenberg evolution 
        \begin{equation} \label{alpha}
			\alpha^\Phi_t(A) = \exp\left(it \sum_{\Lambda \in P_0(\mathcal{L})} \phi(\Lambda)\right) A\exp\left(-it \sum_{\Lambda \in P_0(\mathcal{L})} \phi(\Lambda)\right) . 
		\end{equation}
        Under suitable decay conditions on the interactions, which we spell below, one obtains such a time-evolution for all $A \in \mathcal{U}$, see Appendix \ref{sect:Dynamics}.

		Our Mermin-Wagner theorem addresses thermal equilibrium states at non-zero temperature. These are modeled by the so-called Kubo-Martin-Schwinger (KMS) condition. To define such KMS-states, one extends the dynamics given by the group of automorphisms $\alpha=(\alpha_t)_{t\in \mathbb{R}}$ to imaginary times. In general, this is not possible, however, if $A\in\mathcal{U}$ is such that the map $\mathbb{R}\ni t\mapsto \alpha_t(A)$ has an extension to an analytic function $\mathbb{C}\ni z\mapsto \alpha_z(A)$, then we call $A$ an $\alpha$-analytic element. We will denote by $\mathcal{U}_\alpha\subseteq\mathcal{U}$ the $*$-subalgebra of $\alpha$-analytic elements.


		\begin{defn} (KMS states) 
			\label{def: KMS states}
			Let $\alpha=(\alpha_t)_{t\in\mathbb{R}}$ be a strongly continuous 1-parameter group of automorphisms of $\mathcal{U}$. A state $\omega$ on $\mathcal{U}$ is called a KMS state of $(\mathcal{U},\alpha)$ at inverse temperature $\beta$ if 
			\begin{equation}
				\omega\bigl(A\alpha_{i\beta}(B)\bigr)=\omega(BA)
			\end{equation}
			for all $A,B$ in $\mathcal{U}_\alpha$. We denote the set of $(\mathcal{U},\alpha)$-KMS states at $\beta\in\mathbb{R}$ by $K_\beta(\alpha)$.
		\end{defn}
		One can show that this KMS condition captures the defining properties of thermal equilibrium (see \cite[ Section 5.3.1]{BratelliRobinson2}).  For $C^*$-algebras of quasi-local operators and time-evolutions induced by interactions, the KMS condition is closely related to a maximum entropy principle (see \cite[Section 6.2.3]{BratelliRobinson2}). The relative entropy will play a crucial role in our proof, see Appendix~\ref{sect:PfisterFröhlich}. 
		
		
		The last ingredient in the formulation of the Mermin-Wagner theorem is that of a symmetry of the time-evolution and that of symmetry-preserved or broken states. 
		\begin{defn}  (Symmetries and their breaking)
			\label{def: symmetries}
			Let $\alpha=(\alpha_t)_{t\in\mathbb{R}}$ be a strongly continuous 1-parameter group of automorphisms of $\mathcal{U}$.
			An automorphism $\tau\in Aut(\mathcal{U})$ is called a symmetry of $(\mathcal{U},\alpha)$ if 
			\begin{equation*}
				\alpha_t\circ\tau=\tau\circ\alpha_t \;\; \mathrm{for}\;\mathrm{all}\;t\in\mathbb{R}.
			\end{equation*}
			If $\omega$ is a state on $\mathcal{U}$, then a symmetry $\tau$ of $(\mathcal{U}, \alpha)$ is said to be preserved by $\omega$ if 
			\begin{equation*}
				\omega\circ\tau=\omega.
			\end{equation*}
			Otherwise, $\tau$ is said to be broken on $\omega$. 
		\end{defn}
		One can show 
		that if $\omega\in K_\beta(\alpha)$ and $\tau$ is a symmetry of $(\mathcal{U},\alpha)$, then $\omega \circ\tau\in K_\beta(\alpha)$. Thus, this definition captures the intuitive picture that pre-composing with a symmetry can at most swap between thermal equilibrium states at the same $\beta$. In light of this, the question of whether a given symmetry is spontaneously broken on a given KMS state is natural. 
		
		\subsection{Multipole symmetries and main result}

		Our main theorem will address symmetries generated by multipole moments for a family of local charge operators. 
		
		\begin{assumption} (Family of local charge operators) 
			\label{as: family of charge operators}
            We call a family  $\{n_x \ \vert \ x\in\mathcal{L}\}\subseteq\mathcal{U}_{\mathrm{loc}}$ of self-adjoint operators $n_z$ a family of local charge operators if 
			\begin{enumerate}[(i)]
				\item $[n_x,n_y]=0$ for all $x,y\in\mathcal{L}$.
				\item There exists $N_0>0$ such that $\Vert n_x \Vert\leq N_0$ for all $x\in \mathcal{L}$.
				\item There exists  $R_0>0$ such that the support satisfies $\mathrm{supp}(n_x)\subseteq B_{R_0}(x) \cap \mathcal{L} $ for all $x\in \mathcal{L}$. 
			\end{enumerate}
		\end{assumption}

		The operator $n_x$ models a local charge operator near site $x\in\mathcal{L}$. Conditions $(i)$ and $(iii)$ are automatically satisfied if $n_x\in \mathcal{U}_{\{x\}}$ for all $x\in \mathcal{L}$, which is the case for many typical models of quantum statistical mechanics, cf.~\cite{Simon+1993,naaijkens2017quantum,LectureNotes}.
		
		The multipole symmetry associated with some multi-index $a\in \mathbb{N}_0^d$ is defined via the following 1-parameter group of automorphisms
		\begin{equation}
			\label{eq:symmetries}
			\tau^{(a)}_s(A)=\exp{\Bigl(-is\sum_{\substack{x\in\mathcal{L}:\\\mathrm{supp}(n_x)\cap\Lambda\ne\emptyset}}x^an_x\Bigr)}A\exp{\Bigl(is\sum_{\substack{x\in\mathcal{L}:\\\mathrm{supp}(n_x)\cap\Lambda\ne\emptyset}}x^an_x\Bigr)} , \qquad A \in \mathcal{U}_\Lambda , 
		\end{equation}
		with $x^a = \prod_{j=1}^d x_j^{a_j}$ and $ s \in \mathbb{R} $. Equation \eqref{eq:symmetries} defines $ \tau^{(a)}_s $ on all local observables $A\in \mathcal{U}_\mathrm{loc}$. We show in Proposition \ref{th: definition of the multipole symmetries} that one can extend it to a strongly continuous one-parameter group of automorphisms on the entire quasi-local algebra.
        For example, in case $ a = 0  $ the generator of the symmetry is the total charge. In case $a =  (1, 0 , \dots , 0) = e_1 $, the dipole moment along the first coordinate axis generates the symmetry.  
		
		As a last assumption, we require that interactions decay sufficiently fast and are invariant under the multipole symmetries up to some order. 
		\begin{assumption} (Symmetric interactions) \label{as: interaction}
			We assume that our interaction $\Phi=(\phi(\Lambda))_{\Lambda\in P_0(\mathcal{L})}$ is $ k $-symmetric with $k\in \mathbb{N}_0$ in the  sense that:
            \begin{enumerate}
                \item for every multi-index $a\in \mathbb{N}_0^d$ with $\vert a\vert \coloneqq \sum_{j=1}^d a_j \leq k$, we have 
			\begin{equation} \label{as:invariance}
				\tau_s^{(a)}\bigl(\phi(\Lambda)\bigr)=\phi(\Lambda)
			\end{equation}
			for all $\Lambda\in P_0(\mathcal{L})$ and all $s\in\mathbb{R}$. 
                \item we have 
			\begin{equation} \label{decay condition}
				\sup_{x\in\mathcal{L}}\Biggl(\sum_{\substack{\Lambda\in P_0(\mathcal{L}):\\ x\in\Lambda}}\Vert\phi(\Lambda)\Vert\Bigl(\sum_{\substack{z\in\mathcal{L}: \\ \mathrm{supp}(n_z)\cap\Lambda\ne\emptyset}}\vert z-x\vert^{k+1}\Bigr)^2\Biggr)<\infty.
			\end{equation}
            \item  there exists $\varepsilon>0$ such that 
			\begin{equation} \label{decay condition 2}
				\sup_{x,y\in\mathcal{L}}(1+\vert x -y\vert)^{\gamma+1+\varepsilon}\sum_{\Lambda\in P_0(\mathcal{L}): \;x,y\in \Lambda}\Vert\phi(\Lambda)\Vert <\infty,
			\end{equation}
            with $ \gamma $ the effective dimension of $ \mathcal{L} $. 
            \end{enumerate}
		\end{assumption}
        For each $ a \in \mathbb{N}_0^d$, the symmetry $(\tau_s^{(a)})_{s\in\mathbb{R}}$ is continuous in the usual jargon of statistical mechanics. 
        The symmetry condition \eqref{as:invariance} is in general strictly stronger than Definition \ref{def: symmetries}. Generically, many different interactions induce the same time-evolution $\alpha$, and whether this stronger symmetry condition holds or not can depend on the choice of the interaction.\\
		The decay conditions serve two different purposes: Condition \eqref{decay condition 2} ensures that the interaction  $\Phi$ induces a time-evolution $ \alpha $ on $\mathcal{U}$ in the sense of \eqref{alpha}. This is proved in Appendix \ref{sect:Dynamics} using standard tools. Condition \eqref{decay condition} is needed to control the second-order error in the Taylor expansion needed in the proof of the Mermin-Wagner result. The rate of decay \eqref{decay condition}  hence depends on the order of the multipole symmetry $k\in \mathbb{N}_0$ under consideration. We will show in Corollary~\ref{cor: suff. cond for decay cond.} in the appendix that
		\begin{equation} \label{simpler decay}
			\sup_{x,y\in \mathcal{L}} (1+\vert x -y\vert)^{4+2k+2\gamma} \sum_{\Lambda\in P_0(\mathcal{L}) \ : \ x,y\in \Lambda} \Vert \phi(\Lambda) \Vert <\infty,
		\end{equation} 
		is sufficient for condition \eqref{decay condition}. Hence  sufficiently fast algebraic decay of the interaction ensures the validity of both  \eqref{decay condition} and \eqref{decay condition 2}. \\

        Our main result is the following collection of Mermin-Wagner theorems on the absence of breaking of the multipole symmetries in sufficiently low effective dimension. 
		
		\begin{theorem} 
			\label{Thm:Main}
		In the setting of Assumptions \ref{as: lattice} and \ref{as: family of charge operators}, suppose that Assumption \ref{as: interaction} holds for $ k \in \mathbb{N}_0 $. If $a\in \mathbb{N}_0^d$ is a fixed multi-index with $\vert a \vert \leq k$ and the effective dimension satisfies $\gamma \leq 2(k-\vert a \vert +1)$, then the multipole symmetry $(\tau_s^{(a)})_{s\in\mathbb{R}}$ is preserved on every $\beta$-KMS state $\omega$ for all $\beta\in\mathbb{R}$; i.e., $\omega\circ\tau_s^{(a)}=\omega$ for all $s\in\mathbb{R}$ and $\omega\in K_\beta(\alpha)$. 
		\end{theorem}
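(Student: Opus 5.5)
We follow the Pfister/Fröhlich--Pfister relative-entropy approach (Appendix~\ref{sect:PfisterFröhlich}). Fix $\omega\in K_\beta(\alpha)$, $|a|\le k$ and $s\in\mathbb{R}$.

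\emph{Step 1: reduction to local perturbations.} The global generator $\sum_x x^a n_x$ does not exist, so we approximate $\tau^{(a)}_s$ by a localized twist. For $L\ge 1$ pick a cutoff $g_L(x)=g(x/L)$ with $g\in C_c^\infty(\mathbb{R}^d)$, $g\equiv 1$ on $B_1(0)$ and $g\equiv0$ outside $B_2(0)$. Let $\tau_{L,s}$ be the inner automorphism generated by the finite sum $Q_L=\sum_x g_L(x)x^a n_x$; this sum is local and self-adjoint by Assumption~\ref{as: family of charge operators}. For an observable $A$ supported near the origin, $\tau_{L,s}(A)=\tau^{(a)}_s(A)$ as soon as $L$ is large, because the $n_x$ commute. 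So it suffices to show $|\omega(\tau_{L,s}(A))-\omega(A)|\to 0$.

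\emph{Step 2: entropy inequality.} The standard Fröhlich--Pfister bound for KMS states compares $\omega$ with $\omega\circ\tau_{L,s}$. We would apply it in the symmetrized form $\omega\circ\tau_{L,s}+\omega\circ\tau_{L,-s}-2\omega$ and combine it with convexity to control $\|\omega\circ\tau_{L,s}-\omega\|_{\mathcal{U}_{\Lambda_0}}$. Only the symmetrized form works here, since the first-order term must cancel. The bound reduces to estimating
\[
\beta\,\bigl\|\tau_{L,s}(H)+\tau_{L,-s}(H)-2H\bigr\|
\]
in the appropriate relative sense, namely the sum over $\Lambda$ of $\|\tau_{L,s}(\phi(\Lambda))+\tau_{L,-s}(\phi(\Lambda))-2\phi(\Lambda)\|$. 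If that sum tends to zero, the relative entropy tends to zero, and Pinsker's inequality gives the claim.

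\emph{Step 3: Taylor/multipole cancellation (main obstacle).} For each $\Lambda$, pick $x\in\Lambda$ and Taylor expand $g_L(z)z^a$ around $x$ up to order $k-|a|$. Every polynomial term $z^b$ with $|b|\le k$ generates, via \eqref{as:invariance}, a symmetry leaving $\phi(\Lambda)$ invariant. Since the $n_z$ commute, these terms factor out exactly, and only the remainder $r_\Lambda(z)$ acts nontrivially. Its size is $|r_\Lambda(z)|\lesssim L^{|a|-(k+1)}\,|z-x|^{k+1}\,\sup|\partial^{k+1-|a|}g|$ times polynomial factors; these factors need care because $z^a$ is unbounded. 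To handle them, expand $z^a$ around $x$ as well, so that the derivatives of the full product $g_L(z)z^a$ contribute at worst $L^{|a|-(k+1)}$ up to constants on the region $|x|\lesssim L$. The symmetrized second-order expansion then gives
\[
\|\tau_{L,s}(\phi)+\tau_{L,-s}(\phi)-2\phi\|\lesssim s^2 N_0^2\|\phi(\Lambda)\|\Bigl(\sum_{z:\,\mathrm{supp}(n_z)\cap\Lambda\neq\emptyset}|r_\Lambda(z)|\Bigr)^2 .
\]
Only sets $\Lambda$ meeting the region $|x|\lesssim L$ where $g_L$ is nonconstant contribute. Summing over $x$ in that region, which contains $O(L^\gamma)$ points by \eqref{eq: geometric dimension condition}, and using \eqref{decay condition} yields the total bound
\[
C s^2 L^{\gamma-2(k+1-|a|)} .
\]
The interactions far from $x$ but containing $x$ are controlled by the same decay condition.

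\emph{Step 4: marginal case.} When $\gamma<2(k-|a|+1)$, Step~3 already gives decay. In the marginal case $\gamma=2(k-|a|+1)$ the bound is only $O(1)$, so we replace $g$ by the logarithmic cutoff $g_L(x)=\frac{\log(L^2/|x|)}{\log L}$, clipped to $[0,1]$, or by a superposition of dyadic scales. For $k+1-|a|>1$ the higher derivatives of this cutoff must be kept small, so we use $g_L=\frac1{M}\sum_{j=1}^M g(x/2^j)$. Each dyadic shell then contributes $O(M^{-2})$ and there are $M$ of them, giving $O(1/M)\to0$. This marginal case, together with bookkeeping the polynomial weight $z^a$ in the remainder, is the delicate part. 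Finally, the statement for general $s$ follows from small $s$ by the group property, and $\beta<0$ is handled symmetrically ($\beta=0$ being trivial by the same entropy bound).
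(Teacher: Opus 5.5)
\textbf{Verdict: correct in the strictly subcritical case, but with a genuine gap in the marginal case $\gamma=2(k-|a|+1)$.} The missing idea is that a uniform bound on the entropy, not decay, is all the theorem requires.

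\textbf{Where you agree with the paper, and where you differ.} You use the same machinery as the paper: a localized twist, the symmetrized entropy identity behind Proposition~\ref{th: Proposition 1}, cancellation of the degree-$\le k$ Taylor polynomial by \eqref{as:invariance}, a remainder of size $L^{|a|-k-1}|x-z|^{k+1}$, and $O(L^\gamma)$ counting. (This needs the full product $z^a g_L(z)$ expanded to order $k$, not $g_L$ to order $k-|a|$.) Your target is stronger than the paper's: you want $S(\omega|\omega\circ\tau_{L,s})\to 0$, then finish with monotonicity and Pinsker. For $\gamma<2(k-|a|+1)$ this works, and it is slightly more elementary, since it avoids the extremal decomposition of KMS states.

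\textbf{The gap: Step 4.} The marginal case is exactly the headline content ($d=2$ for $k=0$; $d=4$ for the charge with dipole symmetry), and there Step~4 is a heuristic that does not close under the stated hypotheses.
\begin{itemize}
\item The claim that each dyadic shell contributes $O(M^{-2})$ uses the local size $M^{-1}2^{-j(k+1-|a|)}$ of $D^{k+1}(z^a g_L)$ at $|x|\sim 2^j$.
\item The Lagrange remainder for $\Lambda\ni x$ is instead controlled by the supremum of that derivative along the segments $[x,z]$, $z\in E(\Lambda)$. For a long-range $\Lambda$ these segments can pass through the innermost shells, where the derivative is only $O(M^{-1})$.
\item For such terms you lose the factor $2^{-2j(k+1-|a|)}=2^{-j\gamma}$ that should offset the roughly $2^{j\gamma}$ points in shell $j$.
\item Recovering it via $|x|\lesssim |x-z|$ needs moments of order about $2(k+1)-|a|$, not the $(k+1)$-st moment in \eqref{decay condition}, and \eqref{decay condition} gives no rate on long-range tails.
\item Separately, the clipped logarithmic cutoff is not $C^{k+1}$ once $k\ge 1$, whatever $|a|$ is.
\end{itemize}
This is repairable. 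For instance, expand around the point of $\Lambda$ nearest the origin, and bound the remainder directly as $q(z)-T_x^k q(z)$ with $q(z)=z^a\bigl(g_L(z)-g_L(x)\bigr)$, using $|g_L(z)-g_L(x)|\lesssim M^{-1}\bigl(1+\log(|z|/|x|)\bigr)$. That is a real additional argument your proposal does not contain.

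\textbf{How the paper avoids it.} The paper proves only $\sup_m\|U_m\delta(U_m^*)+U_m^*\delta(U_m)\|<\infty$, i.e.\ a uniform bound on $S(\omega|\omega\circ\tau_m)$. It then uses Lemma~\ref{lem:fact1}: finite relative entropy between $\omega$ and $\omega\circ\tau$ forces $\omega=\omega\circ\tau$ for extremal $\beta$-KMS states, since $\omega\circ\tau$ is again extremal KMS and distinct extremal ones are disjoint. Convexity extends this to all of $K_\beta(\alpha)$. With the single-scale cutoff $\chi_m$ the bound is $O(m^{\gamma-2(k-|a|+1)})$, which is bounded exactly when $\gamma\le 2(k-|a|+1)$. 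So the marginal case costs nothing extra.

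Within your Pinsker framework, boundedness alone would not suffice. An $O(s^2)$ entropy bound gives only an $O(|s|)$ bound on $\|\omega\circ\tau_s-\omega\|$, and the group property cannot improve a linear bound.
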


		Let us demonstrate the applicability of the result for $\mathcal{L}=\mathbb{Z}^d$, i.e., when $d=\gamma$. The case $ k= 0 $ covers the absence of the breaking of the charge symmetry ($a=0$) if $d=\gamma\leq 2 $. We hence recover the standard Mermin-Wagner theorem. In case $ k = 1 $, both charge and the dipole symmetries in all directions are present. In this case, Theorem~\ref{Thm:Main} rules out the breaking of the charge symmetry ($ a= 0 $) in case $d=\gamma\leq 4 = 2 (1-0+1)$. In contrast, the dipole symmetries associated with the $ d $ directions ($a = e_j$ with $ j \in \{ 1, \dots . d\} $) remain unbroken only for $d= \gamma \leq 2 $. The case $ k = 1 $ thus constitutes a minor generalization of the Mermin-Wagner theorem in \cite{PhysRevB.106.245125Kapustin}, where Theorem~\ref{Thm:Main} shows that one may drop the additional assumption on clustering of the state.

		\subsection{Comments and outline of the paper}
        Theorem~\ref{Thm:Main} is proven in Section \ref{Sect:Proof}. As a preliminary step, we show that the multipole symmetry exists in the infinite-volume limit, ensuring that the problem is well-posed. Using standard methods, we then reduce the Mermin–Wagner theorem to a bound on the relative entropy of the KMS state $ \omega $ and its symmetry-transformed KMS state $ \omega \circ \tau_s^{(a)}$. As usual, this bound is proven to be finite via an appropriate Taylor expansion of a smoothly truncated unitary $U_m^{(a)}(s) $ representing the multipole symmetry.

Appendix \ref{sect:Dynamics} reviews standard arguments involving general $ F$-functions (Definition~\ref{def: F functions}), which establish the existence of the infinite-volume time-evolution $ \alpha $ generated by our interactions. Appendix \ref{sect:PfisterFröhlich} reviews the proof of the entropy-based criterion from \cite{PfisterPaper}. We also discuss a variant of the main theorem for slab geometries of the form $\mathcal{L}\subset \mathbb{R}^{d-\ell}\times[-M,M]^\ell$ in Appendix~\ref{sect:Slab}.

      Our aim was not to present the most general form of the Mermin-Wagner theorem for multipole symmetries, but rather a version that applies to physically relevant systems and relies on assumptions that are easy to verify. For instance, Assumption~\ref{as: lattice} could be weakened by only requiring  that $\lvert \mathcal{L}\cap B_r(x)\rvert$ is finite for all $r>0$ and $x\in\mathcal{L}$, provided that \eqref{decay condition 2} is replaced by the existence of an $F$-function for $\mathcal{L}$. We impose Assumption~\ref{as: lattice} mainly to ensure that $F(r)=(1+r)^{1+\gamma+\varepsilon}$ is an $F$-function for $\mathcal{L}$ and to guarantee the regularity of the truncated symmetry $U_m^{(a)}$; precise definitions can be found in Appendix \ref{sect:Dynamics}. We emphasize that we do not assume any form of clustering, which is the main advantage compared with the strategy of \cite{PhysRevB.106.245125Kapustin}.
		
		\section{Proof of Theorem \ref{Thm:Main}} \label{Sect:Proof}
		The proof proceeds in three steps. We first establish the existence of the multipole symmetry $ \tau_s^{(a)} $ as a continuous symmetry on $ \mathcal{U} $. We then employ the entropy methods from \cite{PfisterPaper} and reduce the proof of Theorem~\ref{Thm:Main} to the following finiteness criterion. Its formulation requires the notion of exhausting sequences, which we will use throughout the proof. 
        \begin{defn} (Exhausting sequences of finite subsets) We write $ \Lambda_n \uparrow \Lambda $ for any sequence of finite sets $ \Lambda_n\subseteq\mathcal{L}$, which is increasing $ \Lambda_n \subset \Lambda_{n+1} $ and which exhausts the entire space, $ \bigcup_{n\in \mathbb{N}} \Lambda_n = \mathcal{L} $.
        \end{defn}
        The following is an adaptation of~\cite{PfisterPaper} (see also~\cite{NachtergaeleMerminWagner}) to the present set-up. 
				\begin{prop} \label{th: Proposition 1}
			In the setting of Theorem~\ref{Thm:Main} suppose that for some $\Lambda_m\uparrow\mathcal{L}$ with an associated sequence of unitaries $(U_m)_{m\in\mathbb{N}}\subseteq\mathcal{U}_{\mathrm{loc}}$ such that $ \tau^{(a)}_s(A)=U_m^*AU_m $ for all $m\in\mathbb{N}$ and $ A\in\mathcal{U}_{\Lambda_m} $ 
            we have
            \begin{equation}
					\label{eq: bound expectation value of U_m delta U_m^* not appendix}
                              \sup_{m\in\mathbb{N}} \ \bigl\|U_m\delta(U_m^*)+U_m^*\delta(U_m\bigr)\bigr\|<\infty,
				\end{equation}
                with $ \delta $ the infinitesimal generator of the time-evolution $ \alpha $. 
				Then $\omega\circ\tau_s^{(a)} = \omega$ for all $\omega\in K_\beta(\alpha)$; i.e., $\tau_s^{(a)}$ cannot be spontaneously broken on any thermal equilibrium states at inverse temperature $\beta$. 
            \end{prop}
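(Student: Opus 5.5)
This is the Pfister / Fröhlich–Pfister entropy argument. I would use Araki's relative entropy together with the Araki–Sewell (Gibbs-variational) inequality for KMS states: if $\omega\in K_\beta(\alpha)$ and $U\in\mathcal{U}_{\mathrm{loc}}$ is unitary, then the perturbed state $\omega_U(A)=\omega(U^*AU)$ has finite relative entropy with respect to $\omega$. The bound is controlled by $\beta\,\omega(U\delta(U^*))$. The plan has three steps.

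\textbf{Step 1: symmetrized relative-entropy bound.} For $\omega\in K_\beta(\alpha)$ and $U=U_m\in\mathcal{U}_{\mathrm{loc}}\subseteq D(\delta)$, I would establish
\[
S(\omega_{U_m},\omega)\le -\beta\,\omega\bigl(U_m\delta(U_m^*)\bigr).
\]
Here $S$ is Araki's relative entropy; sign and normalization conventions would be fixed along the way. Applying the same inequality with $U_m^*$ in place of $U_m$ and adding gives
\[
S(\omega_{U_m},\omega)+S(\omega_{U_m^*},\omega)\le \beta\,\bigl\|U_m\delta(U_m^*)+U_m^*\delta(U_m)\bigr\|.
\]
By hypothesis \eqref{eq: bound expectation value of U_m delta U_m^* not appendix}, the right-hand side is bounded uniformly in $m$. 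Both entropies are nonnegative, so $\sup_m S(\omega_{U_m},\omega)<\infty$.

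The symmetrization is essential. Each term $\omega(U_m\delta(U_m^*))$ alone may grow with $m$, since it contains the first-order part of the symmetry generator. In the sum, these first-order parts cancel.

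\textbf{Step 2: pass to the limit $m\to\infty$.} Since $\tau_s^{(a)}(A)=U_m^*AU_m$ for $A\in\mathcal{U}_{\Lambda_m}$, the restrictions satisfy $\omega_{U_m}|_{\mathcal{U}_{\Lambda_n}}=(\omega\circ\tau_s^{(a)})|_{\mathcal{U}_{\Lambda_n}}$ whenever $m\ge n$. Relative entropy does not increase under restriction to subalgebras, so
\[
S_{\Lambda_n}(\omega\circ\tau_s^{(a)},\omega)\le S(\omega_{U_m},\omega)\le C
\]
for all $n$. The constant $C$ is independent of $n$ and uniform in $s$ over compact sets. By lower semicontinuity (monotone convergence of local entropies), $S(\omega\circ\tau_s^{(a)},\omega)\le C<\infty$.

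\textbf{Step 3: extremal decomposition.} First let $\omega$ be an extremal KMS state (a factor state). By Step 2, $\omega\circ\tau_s^{(a)}$ has finite relative entropy with respect to $\omega$, so it is normal with respect to $\omega$ (quasi-equivalent). Since $\omega\circ\tau_s^{(a)}$ is also a $\beta$-KMS state, and distinct extremal KMS states are disjoint, we get $\omega\circ\tau_s^{(a)}=\omega$.

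For a general $\omega\in K_\beta(\alpha)$, I would use the unique decomposition of $\omega$ into extremal KMS states (the simplex structure of $K_\beta$). Each extremal component is invariant, so $\omega$ is invariant.

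An alternative, as in \cite{PfisterPaper}, avoids decomposition: apply the entropy bound to $\omega$ directly and obtain invariance on the center.

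\textbf{Main obstacle.} I expect the hardest part to be the rigorous infinite-volume entropy inequality in Step 1: justifying Araki's perturbation theory for the unbounded derivation $\delta$ and checking that $\mathcal{U}_{\mathrm{loc}}\subseteq D(\delta)$. The latter uses Appendix~\ref{sect:Dynamics}. My first attempt would be the Araki–Sewell formula
\[
S(\omega_U,\omega)=-\beta\,\omega_U\bigl(U^*[\,\cdot\,]\bigr),
\]
realized through the relative Hamiltonian $\beta^{-1}\log\Delta$ in the GNS representation.
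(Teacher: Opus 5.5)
Your outline is correct and follows the same Fr\"ohlich--Pfister skeleton as the paper. It has a symmetrized entropy bound for the locally implemented state $\omega\circ\tau_m$. Agreement on $\mathcal{U}_{\Lambda_m}$, monotonicity under restriction and the monotone limit transfer this bound to $\omega\circ\tau_s^{(a)}$. Finally, finite entropy gives non-disjointness, which gives equality for extremal (factor) KMS states, and this extends to all of $K_\beta(\alpha)$.

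The real difference is Step~1, which you leave as a black box. The paper proves it by a lattice construction:
\begin{itemize}
\item It strips off the surface energy $h_m$ and shows (Lemma~\ref{lem: helper}) that $\omega^{h_m}$ is the normalized trace on $\mathcal{U}_{Q_m}$ tensored with something, hence $\tau_m$-invariant.
\item It then uses Trotter's formula and analytic continuation (Lemma~\ref{lem:perturbo}) to get $\omega\circ\tau_m=\omega^{h_m-U_mh_mU_m^*}$.
\item Finally, $S(\omega|\omega^W)=-\beta\omega(W)$ together with \eqref{eq: time derivative} gives the symmetrized identity.
\end{itemize}
The modular route you hint at in your last paragraph also works and is shorter. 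Use the convention $S(\omega|\phi)=-\langle\Omega,\log\Delta_{\phi,\omega}\Omega\rangle$, which matches the paper's. On the core $\pi_\omega(\mathcal{U})\Omega$ one checks that the relative modular operator satisfies $\Delta_{\omega\circ\tau_m,\omega}=\pi_\omega(U_m)\Delta_\omega\pi_\omega(U_m)^*$. Moreover $\log\Delta_\omega=-\beta L$, where $L\pi_\omega(A)\Omega=-i\pi_\omega(\delta(A))\Omega$ for $A\in D(\delta)$. Hence
\[
S(\omega|\omega\circ\tau_m)=\beta\bigl\langle\pi_\omega(U_m^*)\Omega,\,L\,\pi_\omega(U_m^*)\Omega\bigr\rangle=-i\beta\,\omega\bigl(U_m\delta(U_m^*)\bigr).
\]
This is an identity for every unitary in $D(\delta)$. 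It needs neither locality of $U_m$ nor the tracial lemma or Trotter. The paper's route, in exchange, uses only bounded perturbation theory of KMS states and stays inside the quasi-local framework.

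There are three things to fix when you write Step~1 out.
\begin{itemize}
\item Since $\delta$ is a $*$-derivation and $U_mU_m^*=\mathbb{I}$, the element $U_m\delta(U_m^*)$ is skew-adjoint, so $\omega(U_m\delta(U_m^*))$ is purely imaginary. Your bound therefore needs the factor $-i$ shown above. The final estimate should carry $|\beta|$, since $\beta<0$ is allowed. Your displayed ``Araki--Sewell formula'' is not meaningful as written.
\item You might argue via perturbation theory instead, by observing that $\omega\circ\tau_m$ is $\beta$-KMS for the dynamics generated by $\delta+[U_m\delta(U_m^*),\,\cdot\,]$. That only shows $\omega\circ\tau_m$ is the (normalized) perturbation $(\omega')^W$, with $W=iU_m\delta(U_m^*)$, of \emph{some} $\omega'\in K_\beta(\alpha)$. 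Since $K_\beta(\alpha)$ need not be a singleton, and that is exactly the regime of interest, you would still have to prove $\omega'=\omega$. Lemmas~\ref{lem: helper} and~\ref{lem:perturbo}, or the relative modular identity above, supply this.
\item In Step~3, state that $\omega\circ\tau_s^{(a)}$ is again extremal. This holds because $\omega\mapsto\omega\circ\tau_s^{(a)}$ is an affine bijection of $K_\beta(\alpha)$.
\end{itemize}
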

		As we recall in Appendix \ref{sect:PfisterFröhlich}, where the proof of this criterion can be found, the expression in~\eqref{eq: bound expectation value of U_m delta U_m^* not appendix} serves as an upper bound on the relative entropy of the state $ \omega $ and $ \omega \circ \tau_s^{(a)}$.  As standard in Mermin-Wagner arguments, the  finite-volume approximations $ U_m $ of the symmetry $ \tau_s^{(a)} $ are constructed in terms of smooth truncations, see~\eqref{eq: definition of U_m} below. 
		We will then show that \eqref{eq: bound expectation value of U_m delta U_m^* not appendix} is indeed satisfied for those truncations.


		\subsection{Smooth truncations and existence of the multipole symmetry} \label{sect:symmetry}
		
		In this subsection, we will prove that $\tau^{(a)}_s$ defined in \eqref{eq:symmetries} can be extended to a symmetry. Moreover, these symmetries can be approximated by smooth locally generated unitaries. 
		To implement the approximations, we use a smooth characteristic function of the box $[-m,m]^d\subseteq \mathbb{R}^d$ for $m\in\mathbb{N}$. More precisely,  
		we pick a cut-off function $\tilde{\chi}\in C_c^\infty (\mathbb{R})$ such that $0\leq \tilde{\chi}\leq 1$, $\tilde{\chi}(x)=1$ if $| x |\leq 1$, and $\tilde{\chi}(x)=0$ if $| x | \geq 2$. We then set
		$$\chi:\mathbb{R}^d \rightarrow \mathbb{R}, x\mapsto \prod_{j=1}^d \tilde{\chi}(x_j).$$
		This yields $\chi\in C_c^\infty(\mathbb{R}^d)$ with $0\leq \chi\leq 1$, $\chi(x)=1$ if $ | x |_\infty \coloneqq \max_{j\in \{1,\dots, d\}} \vert x_j\vert \leq 1$, and $\chi(x)=0$ if $ | x |_\infty \geq 2$.
		Lastly, for $m\in\mathbb{N}$ we define the rescaled version $$\chi_m:\mathbb{R}^d \rightarrow \mathbb{R}, \chi_m(x) = \chi(x/m),$$ 
		which defines $\chi_m\in C_c^\infty(\mathbb{R}^d)$ with $\chi_m(x)=0$ if $ | x |_\infty \geq 2m$.
		For $m\in\mathbb{N}$, each multi-index $a=(a_1, ..., a_d)\in\mathbb{N}_0^d$, and $s\in\mathbb{R}$ we then set 
		\begin{equation}
			\label{eq: definition of U_m}
			U_m^{(a)}(s):=\exp{\Bigl(is\sum_{x\in\mathcal{L}}\chi_m(x)x^an_x\Bigr)}.
		\end{equation}
		By Assumptions~\ref{as: lattice} and \ref{as: family of charge operators} the series in the exponential
		is a finite sum. As the charge operators are  self-adjoint and $x^a = \prod_{j=1}^d x_j^{a_j}\in \mathbb{R}$, this implies that $U_m^{(a)}(s)$ is a unitary element of $\mathcal{U}_{\mathrm{loc}}$. 
        The following proposition shows that these unitaries indeed approximate the multipole symmetries, which act as~\eqref{eq:symmetries} on local observables. 
		\begin{prop}
						\label{th: definition of the multipole symmetries}
			In the setting of Theorem~\ref{Thm:Main} let $a\in\mathbb{N}_0^d$ be an arbitrary multi-index.
			\begin{itemize}
				\item[(i)]  For every $A\in\mathcal{U}$ and $s\in\mathbb{R}$, the limit $\tau_s^{(a)}(A)=\lim_{m\rightarrow\infty}U_m^{(a)}(s)^*AU_m^{(a)}(s)$ exists. It defines a strongly continuous, norm-preserving 1-parameter group $\tau^{(a)}= (\tau_s^{(a)})_{s\in\mathbb{R}}$ of $*$-automorphisms of $\mathcal{U}$.
				\item[(ii)] For $\Lambda\in P_0(\mathcal{L})$ and $A\in\mathcal{U}_{\Lambda}$, Eq.~\eqref{eq:symmetries} holds.
			\end{itemize}
		\end{prop}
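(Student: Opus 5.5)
The plan is to show that on local observables the sequence $U_m^{(a)}(s)^*AU_m^{(a)}(s)$ is eventually constant and equal to the right-hand side of \eqref{eq:symmetries}, and then to extend by density. The argument rests on the commutativity of the charges.

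\textbf{Step 1: stabilization on local observables.} Fix $\Lambda\in P_0(\mathcal{L})$ and $A\in\mathcal{U}_\Lambda$. By Assumption~\ref{as: family of charge operators}(i) all $n_x$ commute, so the exponential factorizes:
\[
U_m^{(a)}(s)=\exp\Bigl(is\sum_{x\in S_\Lambda}\chi_m(x)x^an_x\Bigr)\exp\Bigl(is\sum_{x\notin S_\Lambda}\chi_m(x)x^an_x\Bigr),
\qquad S_\Lambda:=\{x\in\mathcal{L}:\mathrm{supp}(n_x)\cap\Lambda\neq\emptyset\}.
\]
The second factor is supported away from $\Lambda$ and therefore commutes with $A$.

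By Assumption~\ref{as: family of charge operators}(iii), $S_\Lambda\subseteq\bigcup_{y\in\Lambda}B_{R_0}(y)$. Assumption~\ref{as: lattice} then gives that $S_\Lambda$ is finite and contained in a bounded set. Hence for $m\ge m_0(\Lambda)$, with $m_0$ larger than $\max_{x\in S_\Lambda}|x|_\infty$, we have $\chi_m\equiv 1$ on $S_\Lambda$. Consequently
\[
U_m^{(a)}(s)^*AU_m^{(a)}(s)=e^{-isQ_\Lambda}Ae^{isQ_\Lambda},\qquad Q_\Lambda=\sum_{x\in S_\Lambda}x^an_x,
\]
for all $m\ge m_0(\Lambda)$. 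This is exactly \eqref{eq:symmetries}, which proves (ii) once (i) is established.

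\textbf{Step 2: extension to $\mathcal{U}$.} Define $\tau_s^{(a)}$ on $\mathcal{U}_{\mathrm{loc}}$ by this limit. It is well defined with respect to the inclusions \eqref{eq: natural inclusion}, since the limit does not depend on which $\Lambda$ is used. Each map $A\mapsto U^*AU$ is a $*$-automorphism and is isometric, so the limit is an isometric $*$-homomorphism on $\mathcal{U}_{\mathrm{loc}}$.

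For general $A\in\mathcal{U}$, I will use an $\varepsilon/3$ argument. Choose $A_\varepsilon\in\mathcal{U}_{\mathrm{loc}}$ with $\|A-A_\varepsilon\|<\varepsilon$. Since $\|U_m^*(A-A_\varepsilon)U_m\|<\varepsilon$ for every $m$, the sequence $U_m^*AU_m$ is Cauchy. The limit therefore exists and coincides with the continuous extension. It remains a $*$-homomorphism and stays isometric.

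\textbf{Step 3: group law and surjectivity.} On local $A$, for $m$ large, the factor $e^{-isQ_\Lambda}$ gives $\tau_s\tau_t=\tau_{s+t}$ and $\tau_0=\mathrm{id}$. A small subtlety is that $\tau_t(A)$ is again in $\mathcal{U}_{\Lambda'}$, where $\Lambda'=\bigcup_{x\in S_\Lambda}\mathrm{supp}(n_x)$ is a larger finite set. Since the $n_x$ commute, both the $Q_{\Lambda'}$ and $Q_\Lambda$ versions give the same conjugation by the same argument as in Step 1. By density the group law holds on all of $\mathcal{U}$. Then $\tau_{-s}$ is an inverse, so each $\tau_s$ is an automorphism.

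\textbf{Step 4: strong continuity.} For local $A$, norm continuity of $s\mapsto e^{-isQ_\Lambda}Ae^{isQ_\Lambda}$ is immediate because $Q_\Lambda$ is bounded. By uniform isometry, a $3\varepsilon$ argument then gives $\|\tau_s(A)-A\|\to0$ as $s\to 0$ for all $A\in\mathcal{U}$.

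The only genuinely delicate point is the bookkeeping in Step 1, namely that the truncation $\chi_m$ equals $1$ on the finite set $S_\Lambda$ and that the terms outside it commute with $A$. This relies on the bounded-support and commutativity assumptions. Everything else is a routine density argument.
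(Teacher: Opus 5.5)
Your proposal is correct and follows essentially the same route as the paper: you show that $U_m^{(a)}(s)^*AU_m^{(a)}(s)$ stabilizes on local observables, using commutativity of the $n_x$ and $\chi_m\equiv 1$ on the finitely many relevant sites, and then extend to $\mathcal{U}$ by a Cauchy/density argument. You also spell out the group law, surjectivity and strong continuity that the paper calls clear; two small remarks there are that $\tau_t(A)$ lies in $\mathcal{U}_{\Lambda\cup\Lambda'}$ rather than $\mathcal{U}_{\Lambda'}$, and that the group law follows more directly from $U_m^{(a)}(s)U_m^{(a)}(t)=U_m^{(a)}(s+t)$ by passing to the limit.
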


		\begin{proof}
			For $\Lambda\in P_0(\mathcal{L})$ there exists $m_0\in\mathbb{N}$ large enough such that $\vert z\vert\leq m_0-R_0$ for all $z\in\Lambda$, with $R_0$ given in Assumption \eqref{as: family of charge operators}. By definition, $\chi_m(z)=1$ for all $z\in\Lambda$ and $m\geq m_0$. Since the $n_z$ pairwise commute by assumption, this implies for all $m\geq m_0$ and $A\in\mathcal{U}_\Lambda$:
			\begin{equation*}
				U_m^{(a)}(s)^*AU_m^{(a)}(s)=\exp{\Bigl(-is\sum_{\substack{z\in\mathcal{L}:\\\mathrm{supp}(n_z)\cap\Lambda\ne\emptyset}}z^an_z\Bigr)}A\exp{\Bigl(is\sum_{\substack{z\in\mathcal{L}:\\\mathrm{supp}(n_z)\cap\Lambda\ne\emptyset}}z^an_z\Bigr)}.
			\end{equation*} 
			Here we have used that $A\in \mathcal{U}_\Lambda$ commutes with $n_z$ if $\mathrm{supp}(n_z)$ is disjoint from $\mathrm{supp}(A)=\Lambda$. Furthermore, we have used that $\chi_m(z)=1$ for all $z\in \mathcal{L}$ such that $\mathrm{supp}(n_z)\cap \Lambda \neq \emptyset$ as $\mathrm{supp}(n_z)\subseteq B_{R_0}(z)$.
			The above computation shows that the sequence $\bigl(U_m^{(a)}(s)^*AU_m^{(a)}(s)\bigr)_m$ is constant for $m\geq m_0$ and hence in particular $\lim_{m\rightarrow\infty}U_m^{(a)}(s)^*AU_m^{(a)}(s)$ exists and is of the form claimed in \eqref{eq:symmetries}. 

            Using the fact that $U_m^{(a)}(s)$ is unitary and that the local algebra is dense in the quasi-local algebra, one obtains that $\bigl(U_m^{(a)}(s)^*AU_m^{(a)}(s)\bigr)_m$ is a Cauchy sequence for any $ A \in \mathcal{U}$ and thus convergent in $\mathcal{U}$. 
			Moreover, $\tau_s^{(a)} $ when restricted to $ \mathcal{U}_{\mathrm{loc}}$ is clearly linear, multiplicative, commutes with taking the adjoint, is a strongly continuous 1-parameter group, and norm preserving, $  \| \tau_s^{(a)}(A) \| = \| A \| $. These properties extend to $\mathcal{U}$ by continuity, which is uniform in $ s $ on compact subsets. Hence, the $\tau_s^{(a)}$ form a strongly continuous 1-parameter group of $*$-automorphisms.  
		\end{proof}
		The $1$-parameter group of $*$-automorphisms $\bigl(\tau_s^{(a)}\bigr)_{s\in\mathbb{R}}$ describes the multipole-symmetry associated with the multi-index $a$; in particular the charge symmetry corresponds to the case $a=0$. 
		We will use the following sufficient condition for when these multipole automorphism groups $\tau^{(a)}$ are symmetries of the $C^*$-dynamical system $(\mathcal{U}, \alpha)$ in the sense of Definition \ref{def: symmetries}. 
		\begin{prop}
						\label{th: stronger symmetry assumption}
			In the setting of Theorem~\ref{Thm:Main} with $k\in \mathbb{N}_0$:
            \begin{enumerate}
                \item The interaction $\Phi$ induces a time-evolution $\alpha$.

                \item For any multi-index $a\in\mathbb{N}_0^d$ with $\vert a \vert \leq k$, 
			$\tau^{(a)}$ is a symmetry of $(\mathcal{U},\alpha)$, i.e., $\alpha_t\circ\tau_s^{(a)}=\tau_s^{(a)}\circ\alpha_t$ for all $t,s\in\mathbb{R}$. 
            \end{enumerate}
		\end{prop}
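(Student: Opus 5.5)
Part 1 follows from the standard Lieb--Robinson / $F$-function machinery. Under Assumption~\ref{as: lattice}, $F(r)=(1+r)^{-(\gamma+1+\varepsilon)}$ is an $F$-function for $\mathcal{L}$: it is uniformly summable, $\sup_x\sum_{y}F(|x-y|)<\infty$, because shells $B_{r+1}\setminus B_r$ contain $O(r^{\gamma})$ points by \eqref{eq: geometric dimension condition}. It also satisfies the convolution property $\sum_z F(|x-z|)F(|z-y|)\le C_F F(|x-y|)$. Condition \eqref{decay condition 2} says exactly that $\|\Phi\|_F<\infty$. The general existence theorem reviewed in Appendix~\ref{sect:Dynamics} then gives the following. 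For $\Lambda_n\uparrow\mathcal{L}$, the finite-volume dynamics $\alpha_t^{\Lambda_n}(A)=e^{itH_{\Lambda_n}}Ae^{-itH_{\Lambda_n}}$ with $H_{\Lambda_n}=\sum_{X\subseteq\Lambda_n}\phi(X)$ converge in norm, for every $A\in\mathcal{U}_{\mathrm{loc}}$ and locally uniformly in $t$. This is shown with a Cook-type estimate $\|\alpha^{\Lambda_n}_t(A)-\alpha^{\Lambda_{n'}}_t(A)\|\le\int_0^{|t|}\sum_{X\cap(\Lambda_{n'}\setminus\Lambda_n)\neq\emptyset,\,X\subseteq\Lambda_{n'}}\|[\phi(X),\alpha^{\Lambda_n}_{r}(A)]\|\,dr$ together with the Lieb--Robinson bound. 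The limit extends by density to a strongly continuous group of $*$-automorphisms $\alpha$.

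For Part 2, I would first show that each finite-volume dynamics commutes with $\tau^{(a)}_s$ on $\mathcal{U}_{\Lambda_n}$. Fix $n$ and $A\in\mathcal{U}_{\Lambda_n}$, and choose $m$ large enough that $\chi_m\equiv1$ on the $R_0$-neighbourhood of $\Lambda_n$. By Proposition~\ref{th: definition of the multipole symmetries}(ii) and its proof, $\tau_s^{(a)}$ acts on $\mathcal{U}_{\Lambda_n}$ as conjugation by $V:=\exp\bigl(is\sum_{x:\,\mathrm{supp}(n_x)\cap\Lambda_n\neq\emptyset}x^an_x\bigr)$, which is a fixed local unitary. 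For each $X\subseteq\Lambda_n$, the invariance \eqref{as:invariance} gives $V^*\phi(X)V=\phi(X)$. This uses that conjugation by $V$ restricted to $\mathcal{U}_X$ agrees with $\tau^{(a)}_s$, since the terms $x^an_x$ with support disjoint from $X$ commute with $\phi(X)$. Hence $V^*H_{\Lambda_n}V=H_{\Lambda_n}$, so $V$ commutes with $e^{itH_{\Lambda_n}}$. Since $\alpha^{\Lambda_n}_t(A)\in\mathcal{U}_{\Lambda_n}$, we get $\tau^{(a)}_s(\alpha^{\Lambda_n}_t(A))=V^*e^{itH_{\Lambda_n}}Ae^{-itH_{\Lambda_n}}V=\alpha_t^{\Lambda_n}(V^*AV)=\alpha_t^{\Lambda_n}(\tau^{(a)}_s(A))$.

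Finally, I would pass to the limit $n\to\infty$. The left side converges to $\tau_s^{(a)}(\alpha_t(A))$ because $\tau_s^{(a)}$ is norm-continuous, being an isometric automorphism. On the right, $\tau_s^{(a)}(A)$ is a fixed local element, supported in the $2R_0$-neighbourhood of $\operatorname{supp}A$, so $\alpha^{\Lambda_n}_t(\tau_s^{(a)}(A))\to\alpha_t(\tau_s^{(a)}(A))$. This requires that the same exhausting sequence works for every local observable, which the existence theorem provides. Density of $\mathcal{U}_{\mathrm{loc}}$ and boundedness of both compositions then give $\alpha_t\circ\tau_s^{(a)}=\tau_s^{(a)}\circ\alpha_t$ on $\mathcal{U}$.

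The main obstacle is Part 1, namely verifying the $F$-function convolution property from the polynomial volume growth. The $r_0$ threshold and points at short distance need care. I would split the sum over $z$ according to whether $|x-z|\ge|x-y|/2$ or $|z-y|\ge|x-y|/2$, bound the large factor by $2^{\gamma+1+\varepsilon}F(|x-y|)$, and use uniform summability for the other factor. Everything else is bookkeeping.
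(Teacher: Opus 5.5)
Your proposal is correct and follows essentially the same route as the paper. For Part 1, both use the $F$-function $F(r)=(1+r)^{-(\gamma+1+\varepsilon)}$, for which \eqref{decay condition 2} is exactly $\|\Phi\|_F<\infty$, together with Proposition~\ref{th: time evolution}. For Part 2, both show that each $\phi(X)$ with $X\subseteq\Lambda_n$ is invariant under conjugation by the local unitary implementing $\tau_s^{(a)}$ on $\mathcal{U}_{\Lambda_n}$, deduce that the finite-volume dynamics commutes with $\tau_s^{(a)}$ there, and then let $n\to\infty$ and extend by density. Your check of the convolution property, and your correct minus sign in the exponent of $F$, fill in details the paper calls elementary (and where its formula drops the sign).
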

		\begin{proof}
        Assumptions \eqref{as: lattice} and \eqref{as: family of charge operators} imply that $F(r)=(1+r)^{1+\gamma+\varepsilon}$ is an $F$-function for $\Phi$ and therefore, the corresponding time-evolution exists thanks to standard results summarized in Proposition~\ref{th: time evolution}.

			Using the symmetry-preservation~\eqref{as:invariance}, Proposition~\ref{th: definition of the multipole symmetries}, and the assumption that the $n_x$ pairwise commute,  for $\Lambda_m \uparrow \mathcal{L}$ and all $\Lambda\subseteq\Lambda_m$ and $A\in\mathcal{U}_\Lambda$ we have
			\begin{align}
				\label{eq: stronger symmetry assumption}
					\notag \phi(\Lambda)&=\tau_s^{(a)}(\phi(\Lambda))=\exp{\Bigl(-is\sum_{\substack{x\in\mathcal{L}:\\\mathrm{supp}(n_x)\cap \Lambda\ne\emptyset}}z^an_z\Bigr)}\phi(\Lambda)\exp{\Bigl(is\sum_{\substack{x\in\mathcal{L}:\\\mathrm{supp}(n_x)\cap \Lambda\ne\emptyset}}x^an_x\Bigr)}\\
					&=\exp{\Bigl(-is\sum_{\substack{x\in\mathcal{L}:\\\mathrm{supp}(n_x)\cap \Lambda_m\ne\emptyset}}x^an_x\Bigr)}\phi(\Lambda)\exp{\Bigl(is\sum_{\substack{x\in\mathcal{L}:\\\mathrm{supp}(n_x)\cap \Lambda_m\ne\emptyset}}x^an_x\Bigr)}.
			\end{align}
			For $m$ large enough, we use \eqref{eq:symmetries} and \eqref{eq: stronger symmetry assumption} to rewrite
            \begin{align}
    \notag &\exp{\Bigl(it\sum_{\Lambda\subseteq\Lambda_m}\phi(\Lambda)\Bigr)}\tau_s^{(a)}(A)\exp{\Bigl(-it\sum_{\Lambda\subseteq\Lambda_m}\phi(\Lambda)\Bigr)} \\ 
    \notag 
&=\exp{\Bigl(it\sum_{\Lambda\subseteq\Lambda_m}\phi(\Lambda)\Bigr)}\exp{\Bigl(-is\smashoperator[l]{\sum_{\substack{x\in\mathcal{L}:\\\mathrm{supp}(n_x)\cap\Lambda_m\ne\emptyset}}}x^an_x\Bigr)}A\exp{\Bigl(is\smashoperator[l]{\sum_{\substack{x\in\mathcal{L}:\\\mathrm{supp}(n_x)\cap\Lambda_m\ne\emptyset}}}x^an_x\Bigr)}\exp{\Bigl(-it\sum_{\Lambda\subseteq\Lambda_m}\phi(\Lambda)\Bigr)} \\
                \notag 
                &=\exp{\Bigl(-is\smashoperator[l]{\sum_{\substack{x\in\mathcal{L}:\\\mathrm{supp}(n_x)\cap\Lambda_m\ne\emptyset}}}x^an_x\Bigr)} \exp{\Bigl(it\sum_{\Lambda\subseteq\Lambda_m}\phi(\Lambda)\Bigr)}A\exp{\Bigl(-it\sum_{\Lambda\subseteq\Lambda_m}\phi(\Lambda)\Bigr)} \exp{\Bigl(is\smashoperator[l]{\sum_{\substack{x\in\mathcal{L}:\\\mathrm{supp}(n_x)\cap\Lambda_m\ne\emptyset}}}x^an_x\Bigr)}\\
                &=\tau_s^{(a)}\left( \exp{\Bigl(it\sum_{\Lambda\subseteq\Lambda_m}\phi(\Lambda)\Bigr)}A\exp{\Bigl(-it\sum_{\Lambda\subseteq\Lambda_m}\phi(\Lambda)\Bigr)}  \right).
            \end{align}
           By Proposition \ref{th: time evolution} and the continuity of $\tau_s^{(a)}$ we can take the limit $m\rightarrow \infty$ and obtain $\alpha_t\circ\tau_s^{(a)}=\tau_s^{(a)}\circ\alpha_t$ on $\mathcal{U}_{\mathrm{loc}}$. The identity extends to all of $\mathcal{U}$ by density and continuity of the $*$-automorphisms. 
		\end{proof}
		
		\subsection{Checking the Mermin-Wagner criterion} \label{sect:relative entropy estimate}
		We are now in the position to spell out the core of the argument: we show that in the setting of Theorem~\ref{Thm:Main}, the criterion in Proposition \ref{th: Proposition 1} is satisfied. Here, as is typical for Mermin-Wagner type theorems, we make explicit use of the fact that the multipole symmetries are 1-parameter groups. 
		
		\begin{prop} \label{th: Proposition main Taylor construction}
			In the setting of Theorem~\ref{Thm:Main} with $k\in \mathbb{N}_0$, 
			if $a\in \mathbb{N}_0^d$ is a fixed multi-index with $\vert a\vert\leq k$ and $\gamma \leq 2(k-\vert a \vert +1)$, then~\eqref{eq: bound expectation value of U_m delta U_m^* not appendix} holds for $ U_m \equiv U_m^{(a)}(s) $ and any  $s\in\mathbb{R}$. \\[1ex]
			Assume additionally that there is a subset $I\subseteq\{1,...,d\}$ and constants $M_j\geq0$ such that $\vert x_{j}\vert\leq \vert M_j\vert$ for all $x\in\mathcal{L}$ and all $j\in I$. If \eqref{as:invariance} is weakened to 
            \begin{equation} \label{as: invariance weakened}
               \tau_s^{(a)}(\phi(\Lambda))=\phi(\Lambda)\text{ for all  }\Lambda \in P_0(\mathcal{L}), s\in\mathbb{R}, \mbox{and} \ a\in T_k  \, ,
            \end{equation}
            then \eqref{eq: bound expectation value of U_m delta U_m^* not appendix} still holds for multi-indices $a\in T_k \coloneqq \left\{ a\in \mathbb{N}_0^d \ \vert \ \vert a\vert\leq k \, \mbox{and $a_j=0$ for $j\in I$} \right\} $.
		\end{prop}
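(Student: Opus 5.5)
We will compute $U_m\delta(U_m^*)+U_m^*\delta(U_m)$ locally, interaction term by interaction term. Since $U_m=U_m^{(a)}(s)\in\mathcal{U}_{\mathrm{loc}}$, the generator acts as $\delta(B)=i\sum_{\Lambda}[\phi(\Lambda),B]$ on local $B$; this sum converges by Proposition~\ref{th: time evolution} and \eqref{decay condition 2}. Set $Q_m=\sum_x\chi_m(x)x^an_x$. For each $\Lambda$ split $Q_m=Q_m^\Lambda+R_m^\Lambda$, where $Q_m^\Lambda$ collects the $x$ with $\mathrm{supp}(n_x)\cap\Lambda\neq\emptyset$. Because the $n_x$ commute, $R_m^\Lambda$ commutes with $\phi(\Lambda)$ and with $Q_m^\Lambda$. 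The contribution of $\Lambda$ is therefore
\[
i\bigl(e^{isQ_m^\Lambda}\phi(\Lambda)e^{-isQ_m^\Lambda}+e^{-isQ_m^\Lambda}\phi(\Lambda)e^{isQ_m^\Lambda}-2\phi(\Lambda)\bigr).
\]
This is a symmetric second difference of $g(s)=e^{isQ_m^\Lambda}\phi(\Lambda)e^{-isQ_m^\Lambda}$. Its first-order terms cancel, so its norm is at most $s^2\|[Q_m^\Lambda,[Q_m^\Lambda,\phi(\Lambda)]]\|$, using $\|g''\|\le\|[Q,[Q,\phi]]\|$ and Taylor's formula with integral remainder. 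The desired bound therefore reduces to showing that $\sum_\Lambda\|[Q_m^\Lambda,[Q_m^\Lambda,\phi(\Lambda)]]\|$ is bounded uniformly in $m$.

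The key step uses the symmetries to subtract a polynomial. Pick any $y\in\Lambda$ and let $P_y(x)$ be the Taylor polynomial of degree $k$ of $f_m(x)=\chi_m(x)x^a$ around $y$. Every monomial $x^b$ with $|b|\le k$ gives a generator $\sum_x x^bn_x$. By \eqref{as:invariance} and Proposition~\ref{th: definition of the multipole symmetries}(ii), $\sum_{x:\mathrm{supp}(n_x)\cap\Lambda\neq\emptyset}x^bn_x$ commutes with $\phi(\Lambda)$; differentiate in $s$ to see this. Hence $Q_m^\Lambda$ may be replaced by $\sum_{x:\mathrm{supp}(n_x)\cap\Lambda\ne\emptyset}(f_m(x)-P_y(x))n_x$. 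Taylor's remainder gives $|f_m(x)-P_y(x)|\le \|D^{k+1}f_m\|_\infty|x-y|^{k+1}$. Only $x$ near the support of $\nabla\chi_m$ matter: where $\chi_m\equiv1$ near both $x$ and $y$, $f_m$ is itself a polynomial of degree $|a|\le k$ and the remainder vanishes. Leibniz's rule gives $\|D^{k+1}f_m\|_\infty\lesssim m^{|a|-k-1}$ on $\{|x|_\infty\le 2m\}$, since each derivative falling on $\chi_m$ costs $m^{-1}$ and $x^a$ is of size $m^{|a|}$. Using $\|n_x\|\le N_0$, the double commutator is then bounded by
\[
4N_0^2\,m^{2(|a|-k-1)}\|\phi(\Lambda)\|\Bigl(\sum_{z:\mathrm{supp}(n_z)\cap\Lambda\ne\emptyset}|z-y|^{k+1}\Bigr)^2 .
\]

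Next we sum over $\Lambda$. We split according to a site $y\in\Lambda$ and restrict to $y$ in an annulus-like region of size $O(m)$, because terms with $\Lambda$ far from the transition region contribute nothing, or only through the decay tail. By \eqref{decay condition} each $y$ contributes $O(1)$, and Assumption~\ref{as: lattice} bounds the number of relevant $y$ by $Cm^\gamma$. The total is $O(m^{\gamma-2(k-|a|+1)})$, which is bounded precisely when $\gamma\le2(k-|a|+1)$.

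The main obstacle is making this localization rigorous. Sets $\Lambda$ that are large, or that straddle the region $|x|_\infty\sim 2m$ from far away, must be controlled, and the polynomial growth $|x|^{|a|}$ outside the cutoff is not killed by $\chi_m$ in the remainder estimate. I would first try to use the weight $|z-y|^{k+1}$ in \eqref{decay condition} to absorb both effects. For $y$ far outside $B_{3m}$, all $z$ with $\chi_m(z)\neq0$ satisfy $|z-y|\gtrsim|y|$, which gives extra decay that makes the sum over such $y$ finite. For $y$ deep inside, $f_m-P_y$ vanishes on $B_m$, and the remaining $z$ again sit at distance $\gtrsim m$ from $y$.

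For the slab variant the argument is the same. Coordinates $j\in I$ are bounded, so we only need $\chi_m$ to cut off in the unbounded directions. The Taylor polynomial then only needs monomials with $b_j=0$ for $j\in I$, since the bounded coordinates enter $f_m$ only through the fixed polynomial factor. These monomials are exactly the symmetries available under \eqref{as: invariance weakened}.
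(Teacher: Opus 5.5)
Your local analysis is the same as the paper's, and it is correct. It consists of three steps: writing the contribution of each $\Lambda$ as a symmetric second difference in $s$; removing the degree-$k$ Taylor polynomial of $f_m=\chi_m x^a$ at a point of $\Lambda$ via the multipole invariances; and the resulting bound $4N_0^2 m^{-2(k-|a|+1)}\|\phi(\Lambda)\|\bigl(\sum_z|z-y|^{k+1}\bigr)^2$. Your slab remark also matches the paper's second part. The gap is the summation over $\Lambda$, which you flag as the main obstacle and leave to a tentative annulus/tail analysis. That analysis is not supported by the hypotheses. \eqref{decay condition} is a supremum over the base point, so it gives no summability over far-away $y$. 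And restricting $y$ to an annulus of width $O(m)$ gains nothing over the full box in the count anyway. The worry that motivates it is unfounded. Since $f_m$ vanishes outside $[-2m,2m]^d$ and $f_m(x)=m^{|a|}f_1(x/m)$, you have $\sup_{\mathbb{R}^d}|D^{k+1}f_m|\le C_a m^{|a|-k-1}$ \emph{globally}. Hence the Lagrange remainder bound $|f_m(z)-P_y(z)|\le C_a m^{|a|-k-1}|z-y|^{k+1}$ holds for all $y,z$. This includes the case of $y$ deep inside, where $f_m-P_y=(\chi_m-1)x^a$ grows like $x^a$ outside the box; that growth is already absorbed by $|z-y|^{k+1}$.

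The missing step is a simple counting argument, which is how the paper concludes. Let $S_m:=\bigcup_{x\in\mathcal{L}\cap[-2m,2m]^d}\mathrm{supp}(n_x)$ (the set the paper calls $Q_m$). A set $\Lambda$ contributes only if it meets $S_m$. Otherwise every $x$ with $\mathrm{supp}(n_x)\cap\Lambda\neq\emptyset$ has $\chi_m(x)=0$, so your $Q_m^\Lambda$ vanishes. For each contributing $\Lambda$, choose the expansion point $y\in\Lambda\cap S_m$. Then the sum over $\Lambda$ is bounded by $\sum_{y\in S_m}\sum_{\Lambda\ni y}$, and each inner sum is at most $C s^2 m^{-2(k-|a|+1)}$ by \eqref{decay condition}. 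Finally, $|S_m|\le|\mathcal{L}\cap[-2m-R_0,2m+R_0]^d|=O(m^\gamma)$ by Assumption~\ref{as: lattice}. This gives $O(m^{\gamma-2(k-|a|+1)})$ with no localization near the transition region and no separate treatment of large or straddling $\Lambda$.
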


		\begin{proof}
			Since $\delta$ is the infinitesimal generator of $\alpha$, and $U_m^{(a)}(s)\in\mathcal{U}_{\mathrm{loc}}$, we have by \eqref{eq: time derivative},
			\begin{equation} \label{aux 1}
				\delta\bigl(U_m^{(a)}(s)\bigr)=i\sum_{\Lambda\in P_0(\mathcal{L})}\bigl[\phi(\Lambda), U_m^{(a)}(s)\bigr],
			\end{equation}
            where the sum has to be understood as a suitable limit as explained in \eqref{eq: time derivative}. Set 
			$$Q_m=\bigcup_{x\in\mathcal{L}\cap[-2m,2m]^d}\mathrm{supp}(n_x),$$ which contains the support of $U_m^{(a)}(s)$ for all $s\in \mathbb{R}$. Only terms with $\Lambda\cap Q_m\ne\emptyset$ yield a non-vanishing contribution to the sum in the right-hand side of~\eqref{aux 1}. 
			Thus, we estimate
            \begin{align} \label{eq:first step}
    \notag                    					&\Vert U_m^{(a)}(s)\delta\bigl(U_m^{(a)}(s)^*\bigr)+U_m^{(a)}(s)^*\delta\bigl(U_m^{(a)}(s)\bigr)\Vert \\
					\notag 
                    &=\Big\Vert \sum_{\Lambda\in P_0(\mathcal{L})}\bigl(U_m^{(a)}(s)\bigl[\phi(\Lambda), U_m^{(a)}(s)^*\bigr]+U_m^{(a)}(s)^*\bigl[\phi(\Lambda),U_m(s)\bigr]\bigr)\Big\Vert\\
					\notag 
                    &\leq\sum_{x\in Q_m}\sum_{\substack{\Lambda\in P_0(\mathcal{L}):\\x\in\Lambda}}\Vert U_m^{(a)}(s)\phi(\Lambda) U_m^{(a)}(s)^*-\phi(\Lambda) +U_m^{(a)}(s)^*\phi(\Lambda) U_m^{(a)}(s)-\phi(\Lambda) \Vert \\
					&\leq \vert Q_m\vert \sup_{x\in\mathcal{L}}\sum_{\substack{\Lambda\in P_0(\mathcal{L}):\\x\in\Lambda}}\Vert U_m^{(a)}(s)\phi(\Lambda) U_m^{(a)}(s)^*-\phi(\Lambda) +U_m^{(a)}(s)^*\phi(\Lambda) U_m^{(a)}(s)-\phi(\Lambda) \Vert.
            \end{align}
			We will now bound each terms with fixed $x\in\mathcal{L}$ and $\Lambda \in P_0(\mathcal{L})$ with $\Lambda\ni x$. 
             Since $\chi_m^{(a)}:z\mapsto z^a\chi_m(z)$ is in $C_c^\infty(\mathbb{R}^d)$, a Taylor expansion with Lagrange remainder around the point $x$ yields 
			\begin{equation}
				\label{q: expansion}
				\chi_m^{(a)}(z)=x^a\chi_m(x)+\sum_{l=1}^{k} \frac{1}{l!}D^l\chi_m^{(a)}(x)(z-x)+R_{k+1}(x,z),
			\end{equation}
			where 
			\begin{equation*}
				D^l\chi_m^{(a)}(x)(z-x)\coloneqq\sum_{\nu_1=1}^d...\sum_{\nu_l=1}^d \frac{\partial^l\chi_m^{(a)}}{\partial x_{\nu_1}...\partial x_{\nu_l}}(x) \prod_{j=1}^l (z_{\nu_j}-x_{\nu_j}).
			\end{equation*}
			There exists $\xi_{z,x}\in\{\lambda x+(1-\lambda) z\vert\; \lambda\in[0,1]\}$ such that 
			\begin{align}
					\notag 
                    \vert R_{k+1}(x,z)\vert&\leq\frac{1}{(k+1)!}\vert D^{k+1}\chi_m^{(a)}(\xi_{x,z})(z-x)\vert\\
					\notag &\leq \frac{1}{(k+1)!}\sum_{\nu_1=1}^d...\sum_{\nu_{k+1}=1}^d \vert \frac{\partial^{k+1}\chi_m^{(a)}}{\partial x_{\nu_1}...\partial x_{\nu_{k+1}}}(\xi_{z,x})\vert \prod_{j=1}^{k+1}\vert z_{\nu_j}-x_{\nu_j}\vert\\
					&\leq\frac{m^{\vert a\vert}}{m^{k+1}}\vert x-z\vert^{k+1} \frac{1}{(k+1)!}\sup_{\xi\in\mathbb{R}^d}\sum_{\nu_1=1}^d...\sum_{\nu_{k+1}=1}^d\Bigl\vert\frac{\partial^{k+1}\chi^{(a)}}{\partial x_{\nu_1}...\partial x_{\nu_{k+1}}}(\xi)\Bigr\vert,
				\end{align}
			where we have used that by definition $\chi_m^{(a)}(z)=z^a\chi(z/m)=m^{\vert a\vert}(z/m)^a\chi(z/m)=m^{\vert a\vert}\chi^{(a)}(z/m)$.  
			Note that 
            \begin{equation} \label{def Cb}
                C_a=\frac{1}{(k+1)!}\sup_{\xi\in\mathbb{R}^d}\sum_{\nu_1=1}^d...\sum_{\nu_{k+1}=1}^d\Bigl\vert\frac{\partial^{k+1}\chi^{(a)}}{\partial x_{\nu_1}...\partial x_{\nu_{k+1}}}(\xi)\Bigr\vert
            \end{equation}
			is independent of $m$ and finite since $\chi^{(a)}\in C_c^\infty (\mathbb{R})$. Thus,
			\begin{equation}
				\label{eq: taylor for f_m}
				\vert R_{k+1}(x,z)\vert\leq \frac{C_a}{m^{k-\vert a\vert +1}} \vert x-z\vert^{k+1}.
			\end{equation}
			We now substitute the expansion \eqref{q: expansion} for $\chi_m^{(a)}(z)$ into the definition of $U_m^{(a)}(s)$ and regroup the terms according to their order in $z$ to obtain 
			\begin{align*}
				U_m^{(a)}(s)& =\exp{\Bigl(is\sum_{\substack{z\in\mathcal{L}:\\\mathrm{supp}(n_z)\cap\Lambda\ne\emptyset}}\chi_m^{(a)}(z)n_z\Bigr)}\\
    & =\exp{\Bigl(is\sum_{\substack{z\in\mathcal{L}:\\\mathrm{supp}(n_z)\cap\Lambda\ne\emptyset}}R_{k+1}(x,z)n_z\Bigr)} \exp{\Bigl(\sum_{l=0}^{k}\sum_{\substack{\nu_1,...,\nu_d:\\\nu_1+...+\nu_d=l}}\bigl(isa_{\nu_1,...,\nu_d}(x)\sum_{\substack{z\in\mathcal{L}:\\\mathrm{supp}(n_z)\cap\Lambda\ne\emptyset}}z_{\nu_1}...z_{\nu_d} \, n_z\bigr)\Bigr)}
			\end{align*}
			with coefficients $a_{\nu_1,...,\nu_d}(x)$ depending on $\chi_m$ and $x$. 
			Recall that by Assumption \ref{as: family of charge operators} the $n_z$ pairwise commute. Therefore, the explicit form \eqref{eq:symmetries}  of the symmetry automorphism groups and the symmetry assumption \eqref{as:invariance} allow to systematically cancel all the exponential terms except for the one with the remainder $R_{k+1}$ in the expression for $U_m^{(a)}(s)\phi(\Lambda) U_m^{(a)}(s)^*$. This leaves us with
			\begin{align}
				\label{eq: second step}
				U_m^{(a)}(s)\phi(\Lambda) U_m^{(a)}(s)^* 
    =\exp{\Bigl(is\sum_{\substack{z\in\mathcal{L}:\\\mathrm{supp}(n_z)\cap\Lambda\ne\emptyset}}R_{k+1}(x,z)n_z\Bigr)}\phi(\Lambda)\exp{\Bigl(-is\sum_{\substack{z\in\mathcal{L}:\\\mathrm{supp}(n_z)\cap\Lambda\ne\emptyset}}R_{k+1}(x,z)n_z\Bigr)}.
				\end{align}
			We now define $F_{\Lambda,x}:\mathbb{R}\rightarrow\mathcal{U}$ by
				\begin{align}
				F_{\Lambda,x}(r) \coloneqq &\exp\Bigl({-ir\sum_{\substack{z\in\mathcal{L}:\\\mathrm{supp}(n_z)\cap\Lambda\ne\emptyset}}R_{k+1}(x,z)n_z}\Bigr)\phi(\Lambda) \exp\Bigl({ir\sum_{\substack{z\in\mathcal{L}:\\\mathrm{supp}(n_z)\cap\Lambda\ne\emptyset}}R_{k+1}(x,z)n_z}\Bigr)-\phi(\Lambda) \notag \\
					& + \exp\Bigl({ir\sum_{\substack{z\in\mathcal{L}:\\\mathrm{supp}(n_z)\cap\Lambda\ne\emptyset}}R_{k+1}(x,z)n_z}\Bigr)\phi(\Lambda) \exp\Bigl({-ir\sum_{\substack{z\in\mathcal{L}:\\\mathrm{supp}(n_z)\cap\Lambda\ne\emptyset}}R_{k+1}(x,z)n_z}\Bigr)-\phi(\Lambda). 
			\end{align}
			Combining \eqref{eq: second step} and an analogous expression for $U_m^{(a)}(s)^*\phi(\Lambda) U_m^{(a)}(s)$ re-expresses the term to be bounded as  
			\begin{equation*}
				U_m^{(a)}(s)\phi(\Lambda) U_m^{(a)}(s)^*-\phi(\Lambda) +U_m^{(a)}(s)^*\phi(\Lambda) U_m^{(a)}(s)-\phi(\Lambda)=F_{\Lambda,x}(s).
			\end{equation*}
			Note that $F_{\Lambda,x}$ is a $\mathcal{U}$-valued $C^2$-function, and thus by Taylor's theorem in Banach spaces:
			$$F_{\Lambda,x}(s)=F_{\Lambda,x}(0)+F'_{\Lambda,x}(0)\cdot s+\mathfrak{R}_2(s), \quad \mbox{with} \quad \Vert \mathfrak{R}_2(s)\Vert \leq \frac{s^2}{2} \Vert F_{\Lambda,x}''(s_0)\Vert  $$ 
			at some $ s_0\in[0,s]$. 
			Clearly $F_{\Lambda,x}(0)=0$ and 
			\begin{equation*}
				F'_{\Lambda,x}(0)=-i\Bigl[\sum_{\substack{z\in\mathcal{L}:\\\mathrm{supp}(n_z)\cap\Lambda\ne\emptyset}}R_{k+1}(x,z)n_z, \phi(\Lambda)\Bigr]+i\Bigl[\sum_{\substack{z\in\mathcal{L}:\\\mathrm{supp}(n_z)\cap\Lambda\ne\emptyset}}R_{k+1}(x,z)n_z, \phi(\Lambda)\Bigr]=0.
			\end{equation*}
			Furthermore, we have
			\begin{align*}
				F''_{\Lambda,x}(s_0)=
				 -\Biggl[&\sum_{\substack{z\in\mathcal{L}:\\\mathrm{supp}(n_z)\cap\Lambda\ne\emptyset}}  R_{k+1}(x,z) n_z,\\
				&\qquad \Bigl[\sum_{\substack{y\in\mathcal{L}:\\\mathrm{supp}(n_y)\cap\Lambda\ne\emptyset}}R_{k+1}(x,y)n_y,U_m^{(a)}(s_0)^*\phi(\Lambda) U_m^{(a)}(s_0)+U_m^{(a)}(s_0)\phi(\Lambda) U_m^{(a)}(s_0)^*\Bigr]\Biggr].
			\end{align*}
			Thus, with \eqref{eq: taylor for f_m} and $\Vert n_z\Vert\leq N_0$ for all $z$, we arrive at the following estimate with $C_a$ from \eqref{def Cb}:
			\begin{align*}
					\Vert F_{\Lambda,x}(s)\Vert
                    &\leq\frac{s^2}{2}\Vert F''_{\Lambda,x}(s_0)\Vert \\
					&\leq \frac{s^2}{2} \sum_{\substack{z,y\in\mathcal{L}:\\\mathrm{supp}(n_z)\cap\Lambda\ne\emptyset\\\mathrm{supp}(n_y)\cap\Lambda\ne\emptyset}}\vert R_{k+1}(x,z) R_{k+1}(x,y)\vert \\
					&\qquad \qquad \qquad \qquad \qquad \times \left\Vert\Bigl[n_z,\bigl[n_y,U_m^{(a)}(s_0)^*\phi(\Lambda) U_m^{(a)}(s_0)+U_m^{(a)}(s_0)\phi(\Lambda) U_m^{(a)}(s_0)^*\bigr]\Bigr]\right\Vert \\
					&\leq 4s^2 C_a^2 N_0^2 \frac{1}{m^{2(k-\vert a\vert+1)}}\sum_{\substack{z\in\mathcal{L}:\\\mathrm{supp}(n_z)\cap\Lambda\ne\emptyset}}\sum_{\substack{y\in\mathcal{L}:\\\mathrm{supp}(n_y)\cap\Lambda\ne\emptyset}}\vert x-z\vert^{k+1}\vert x-y\vert^{k+1}\Vert \phi(\Lambda)\Vert \\
					&=4s^2 C_a^2 N_0^2 \frac{1}{m^{2(k-\vert a\vert+1)}}\Bigl(\sum_{\substack{z\in\mathcal{L}:\\\mathrm{supp}(n_z)\cap\Lambda\ne\emptyset}}\vert x-z\vert^{k+1}\Bigr)^2\Vert \phi(\Lambda)\Vert.
				\end{align*}
			Lastly, combining this estimate with \eqref{eq:first step} yields
			\begin{align}
				\label{eq: last step}
					&\Vert U_m^{(a)}(s)\delta(U_m^{(a)}(s)^*)+U_m^{(a)}(s)^*\delta(U_m^{(a)}(s))\Vert \notag\\
					&\leq 4s^2 C_a^2 N_0^2 \frac{\vert Q_m\vert}{m^{2(k-\vert a\vert+1)}} \sup_{x\in \mathcal{L}} \Biggl(\sum_{\substack{\Lambda\in P_0(\mathcal{L}):\\x\in\Lambda}}\Vert \phi(\Lambda)\Vert \Bigl(\sum_{\substack{z\in\mathcal{L}:\\\mathrm{supp}(n_z)\cap\Lambda\ne\emptyset}}\vert x-z\vert^{k+1}\Bigr)^2\Biggr).
				\end{align}
			The supremum is finite by the assumption \eqref{decay condition}. By Assumption \eqref{as: family of charge operators} we have $\mathrm{supp}(n_z)\subseteq B_{R_0}(z)$ for all $z\in \mathcal{L}$, and therefore, we bound
			\begin{equation}
				\vert Q_m\vert=\Big\vert\bigcup_{z\in\mathcal{L}\cap[-2m,2m]^d}\mathrm{supp}(n_z)\Big\vert\leq \vert\mathcal{L}\cap[-2m-R_0,2m+R_0]^d\big\vert .
			\end{equation}
			Since $\vert\mathcal{L}\cap[-m,m]^d\vert=\mathcal{O}(m^\gamma) = \mathcal{O}(m^{2(k-\vert a\vert+1)})$ by Assumption~\eqref{as: lattice}, the right side in \eqref{eq: last step} is the desired $m$-independent bound. This finishes the first part of the proof. 
			
			For a proof of the second part, we fix $I\subseteq\{1,...,d\}$ as stated. 
			By construction, $x_j\mapsto \chi_m(x_1 ,...., x_j,..., x_d)$ is constant on $[-m, m]$. 
			Therefore, for $m>M :=\max_{j\in I}M_j$ and for multi-indices $a$ with $a_j=0$ for all $j\in I$, we obtain that for every fixed $j\in I$, the map $x_j\mapsto x^a\chi_m(x_1,...,x_j,...,x_d)$ is constant in an open neighborhood of $[-M,M]$. This in particular implies 
			\begin{equation}
				\label{eq: certain partial derivatives vanish}
				\partial_j \chi_m^{(a)}(x)=0 \;\; \mathrm{for}\;\mathrm{all}\; x\in\mathcal{L},\; j\in I \;\mathrm{and}\; m>M.
			\end{equation}
			We now proceed as before. 
			Taylor expanding $z\mapsto \chi^{(a)}_m(z)$ around the point $x$ as in \eqref{q: expansion} yields 
			\begin{equation*}
				\chi_m^{(a)}(z)=\sum_{l=0}^{k}\sum_{\substack{\nu_1,...,\nu_d:\\\nu_1+...+\nu_d=l}}\bigl(isa_{\nu_1,...,\nu_d}(x)z_{\nu_1}...z_{\nu_d}\bigr)+R_{k+1}(x,z),
			\end{equation*}
			where the coefficients $a_{\nu_1,...,\nu_d}(x)$ again depend on $x$ and $\chi_m$. However, now we have $a_{\nu_1,...,\nu_d}=0$ if there is a $j\in I$ with $\nu_j\ne0$ due to \eqref{eq: certain partial derivatives vanish}. 
			We then plug this into the expression for $U_m^{(a)}(s)\phi(\Lambda)U_m^{(a)}(s)^*$ and use the symmetry assumption to eliminate all terms in the expansion except for the remainder and end up with \eqref{eq: second step}. For this, we only need the multipole symmetries associated with multi-indices $a$ with $a_j=0$ for all $j\in I$; that is \eqref{as: invariance weakened} is already sufficient, and we do not require \eqref{as:invariance}. We then finish the proof analogously as for the first part.  
		\end{proof}
		
		We can now combine Proposition \ref{th: Proposition 1} and \ref{th: Proposition main Taylor construction} to prove Theorem \ref{Thm:Main}.
		\begin{proof}[Proof of Theorem \ref{Thm:Main}]
		    We apply Proposition \ref{th: Proposition 1} with $U_m:=U_m^{(a)}(s)$ and $\Lambda_m=[-m+R_0,m-R_0]^d\cap\mathcal{L}$.
		Clearly $\Lambda_m\uparrow \mathcal{L}$ and, by construction, we have $\chi_m(x)=1$ if $| x | _\infty\leq m$. 
		Thus, if $x\in\mathcal{L}$ is such that $\mathrm{supp}(n_x)\cap \Lambda_m\ne\emptyset$, then $| x|_\infty\leq m-R_0+R_0=m$. Hence $\chi_m(x)=1$ and  Proposition~\ref{th: definition of the multipole symmetries} ensures the validity of\eqref{eq:symmetries}, i.e., $\tau_s^{(a)}(A)=U_m^{(a)}(s)^*AU_m^{(a)}(s)$  for all  $A\in\mathcal{U}_{\Lambda_m}$. 
		Hence, Proposition \ref{th: Proposition main Taylor construction} establishes that Proposition \ref{th: Proposition 1} is applicable and Theorem \ref{Thm:Main} follows. 
		\end{proof}

		\begin{rem} \label{rem: no ssb for beta=0}
			The multipole symmetries of Proposition \ref{th: definition of the multipole symmetries} can never be spontaneously broken on infinite\--temperature (that is $\beta=0$) KMS states of any time-evolution $\alpha$ on $\mathcal{U}$. Let $\omega\in K_0(\alpha)$ be an infinite temperature KMS state and $a\in\mathbb{N}_0^d$ some multi-index. Then we have, by definition, $\omega(AB)=\omega(A\alpha_0(B))=\omega(BA)$ for all $A,B\in \mathcal{U}$. This tracial property of $\omega$ immediately implies that automatically  $\omega\circ\tau^{(a)}_s=\omega$ for all $s\in\mathbb{R}$. Indeed, we have for $A\in\mathcal{U}$
			\begin{equation}
				\omega\circ\tau_s^{(a)}(A)=\lim_{m\rightarrow\infty}\omega\bigl(U_m^{(a)}(s)^*AU_m^{(a)}(s)\bigr)=\lim_{m\rightarrow\infty}\omega\bigl(AU_m^{(a)}(s)U_m^{(a)}(s)^*\bigr)=\omega(A).
			\end{equation}
			In particular, no assumptions on the spatial dimension $d$ or the decay of the interactions are needed. 
		\end{rem}
		
		\appendix

		\section{Time-evolution in infinite volume} \label{sect:Dynamics}
		
		The goal of this appendix is twofold. We first compile basic results that guarantee the existence of the time-evolution in infinite volume induced by an interaction (see \cite{nachtergaele2006propagation,nachtergaele2019quasi} and references therein, and \cite{LectureNotes,naaijkens2017quantum} for a pedagogical introduction). We also provide sufficient conditions implying that the assumptions of our main theorem are satisfied. 
		
		\subsection{General theory}
Following~\cite{nachtergaele2006propagation,nachtergaele2019quasi,LectureNotes}, we find it convenient to characterize the decay conditions on interactions, which ensure the existence of the time-evolution on the algebra $ \mathcal{U} $ of quasi-local observables in terms of $ F $-functions. 
    \begin{defn} ($F$-functions) 
			\label{def: F functions}
			A map $F:[0, \infty)\rightarrow (0, \infty)$ will be called an $F$-function for the countable metric space $(\mathcal{L},D)$ if it satisfies 
			\begin{enumerate}
				\item[$(a)$] $F(r)\geq F(s)$ for all $0\leq r\leq s$.
				\item[$(b)$] $\Vert F \Vert\coloneqq\sup_{x\in\mathcal{L}}\sum_{y\in\mathcal{L}}F\bigl(D(x,y)\bigr)<\infty$.
				\item[$(c)$] There exists $C_F\in[0,\infty)$ such that for all $x,y\in\mathcal{L}$
				\begin{equation} \label{convolution property}
					\sum_{z\in\mathcal{L}}F\bigl(D(x,z)\bigr)F\bigl(D(z,y)\bigr)\leq C_F F\bigl(D(x,y)\bigr).
				\end{equation}
			\end{enumerate}
            \end{defn}
            In case $\mathcal{L}=\mathbb{Z}^d$ equipped with the Euclidean metric, it is well known~\cite[Sec. 1.1]{nachtergaele2006propagation} that $F:[0,\infty)\ni r\mapsto (1+r)^{-d-\epsilon}$ is an $F$-function for every $\epsilon >0$. 
            For more general metric spaces with effective dimension $ \gamma $ considered in Assumption~\ref{as: lattice}, we have the following
            \begin{lemma} \label{lem: suff. cond for decay cond.}
			In the set-up of Assumption~\ref{as: lattice}  if $\lambda>1+\gamma$, then $F:[0,\infty)\rightarrow \mathbb{R}, r\mapsto (1+r)^{-\lambda}$ is an $F$-function for $(\mathcal{L},D)$.
		\end{lemma}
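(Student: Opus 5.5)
Property (a) holds trivially because $r\mapsto(1+r)^{-\lambda}$ is decreasing. The work lies in (b) and (c), and both reduce to one counting estimate on dyadic shells derived from \eqref{eq: geometric dimension condition}.

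For (b), fix $x\in\mathcal{L}$ and split $\mathcal{L}$ into the ball $B_{r_0}(x)\cap\mathcal{L}$ and the shells $S_j(x)=\{y\in\mathcal{L}:\, 2^{j-1}r_1\le |x-y|<2^{j}r_1\}$ for $j\ge1$, where $r_1=\max(r_0,1)$. By Assumption~\ref{as: lattice}, $|S_j(x)|\le |B_{2^jr_1}(x)\cap\mathcal{L}|\le C(2^jr_1)^\gamma$, and each $y\in S_j(x)$ contributes at most $(1+2^{j-1}r_1)^{-\lambda}$. The central ball contributes at most $|B_{r_1}(x)\cap\mathcal{L}|\le Cr_1^\gamma$ terms, each bounded by $1$. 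Summing gives
\begin{equation*}
\sum_{y\in\mathcal{L}}(1+|x-y|)^{-\lambda}\le Cr_1^\gamma+C\sum_{j\ge1}(2^jr_1)^{\gamma}(2^{j-1}r_1)^{-\lambda},
\end{equation*}
which is finite and independent of $x$ because $\lambda>\gamma$. The same argument also yields the more general tail bound $\sum_{y:|x-y|\ge R}(1+|x-y|)^{-\mu}\le C'(1+R)^{\gamma-\mu}$ for any $\mu>\gamma$, uniformly in $x$. This is the estimate I will reuse in (c).

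For (c), I will use the standard splitting argument. Fix $x,y$ and write $r=|x-y|$. For every $z$, the triangle inequality gives $\max(|x-z|,|z-y|)\ge r/2$, so $\mathcal{L}$ is covered by $A=\{z:|x-z|\ge r/2\}$ and $B=\{z:|z-y|\ge r/2\}$. On $A$ I bound $(1+|x-z|)^{-\lambda}\le(1+r/2)^{-\lambda}\le 2^\lambda(1+r)^{-\lambda}$, and then sum the remaining factor $F(|z-y|)$ over all $z$ using (b). The set $B$ is handled symmetrically. This yields \eqref{convolution property} with $C_F=2^{\lambda+1}\|F\|$.

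In this scheme (c) needs only $\lambda>\gamma$. The stated hypothesis $\lambda>1+\gamma$ is therefore stronger than necessary; it is presumably chosen to fit the $(1+r)^{1+\gamma+\varepsilon}$ weight in \eqref{decay condition 2} and the Lieb--Robinson-type estimates used later. I will simply note that the argument works under the stated assumption.

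The only potential obstacle is that \eqref{eq: geometric dimension condition} holds only for $r\ge r_0$ and only for centers $z\in\mathcal{L}$. Both issues are harmless. All balls in the argument are centered at lattice points $x$ or $y$. Radii below $r_0$ are absorbed into the central ball, whose cardinality is bounded by $Cr_1^\gamma$ uniformly in the center. Once this is checked, the dyadic summation is routine.
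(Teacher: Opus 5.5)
Your proof is correct, and it takes a different and sharper route than the paper. The paper splits $\mathcal{L}$ into linear shells $jr_0\le|x-z|<(j+1)r_0$, $j\ge0$. It bounds the number of points in each shell by the cardinality of the enclosing ball $B_{(j+1)r_0}(x)$, which gives the bound $\sum_j C\bigl((j+1)r_0\bigr)^\gamma(1+jr_0)^{-\lambda}$. That series is finite only for $\lambda>\gamma+1$. The paper then leaves the convolution property (c) to ``elementary arguments''.

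Since Assumption~\ref{as: lattice} only controls balls, both arguments replace a shell by its enclosing ball. With linear shells this overcounts by a factor of order $j$: the ball bounds up to radius $nr_0$ add up to order $n^{\gamma+1}$ rather than $n^\gamma$. With your dyadic shells the overcount is only a constant factor. This is why your argument needs only $\lambda>\gamma$. It also shows that the extra $+1$ in the lemma comes from the crude counting. It is not imposed by later estimates, as you guessed; rather, the exponent $\gamma+1+\varepsilon$ in \eqref{decay condition 2} is inherited from this lemma, and your version would allow lowering it to $\gamma+\varepsilon$.

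Your splitting via $\max(|x-z|,|z-y|)\ge r/2$ supplies the omitted proof of (c), with the explicit constant $C_F=2^{\lambda+1}\|F\|$. It relies only on (b), so the tail bound you announce for reuse in (c) is never actually used.

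Two cosmetic points. The central region should be $B_{r_1}(x)$ rather than $B_{r_0}(x)$; you do use $r_1$ in the estimate itself. Your choice $r_1=\max(r_0,1)$ also covers the degenerate case $r_0=0$, which the paper avoids by tacitly taking $r_0>0$.
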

		\begin{proof}
			If $\nu>0$, then for $x\in\mathcal{L}$ fixed and with $ r_0 > 0 $ and the effective dimension $ \gamma $ from \eqref{as: lattice}:
			\begin{align}
				\label{eq: lemma 2 summability estimate}
				\sum_{z\in\mathcal{L}}\bigl(1+\vert x-z\vert\bigr)^{-\nu}
					&\leq \sum_{j=0}^\infty\big\vert\mathcal{L}\cap B_{(j+1)r_0}(x)\big\vert\bigl(1+jr_0\bigr)^{-\nu}\leq \sum_{j=0}^\infty C\frac{\bigl((j+1)r_0\bigr)^\gamma}{\bigl(1+jr_0\bigr)^\nu}.
				\end{align}
			The right side is finite and independent of $x$ if $\nu>\gamma+1$. Furthermore, $F$ is clearly nonnegative and decreasing. The convolution property \eqref{convolution property} is checked by elementary arguments.
            \end{proof}
        Given an $F$-function, the vector space 
			\begin{equation}\label{def: BF}
				\mathcal{B}_F(\mathcal{L})\coloneqq\Bigl\{\Phi\Big\vert\;\Phi\;\mathrm{is}\;\mathrm{an}\;\mathrm{interaction}\;\mathrm{and}\;\sup_{x,y\in\mathcal{L}}\frac{1}{F(D(x,y))}\sum_{\Lambda\in P_0(\mathcal{L}): \;x,y\in \Lambda}\Vert\phi(\Lambda)\Vert<\infty\Bigr\}
			\end{equation}
            of interactions $ \Phi = ( \phi(\Lambda))_{\Lambda \in P_0(\mathcal{L}} $ is known \cite[Sec.~4]{LectureNotes} to be a Banach space when equipped with the 
            norm 
			\begin{equation}
				\label{eq: F norm definition}
				\Vert \Phi \Vert_F=\sup_{x,y\in\mathcal{L}}\frac{1}{F(D(x,y))}\sum_{\Lambda\in P_0(\mathcal{L}): \;x,y\in \Lambda}\Vert\phi(\Lambda)\Vert . 
			\end{equation} 
			The sum on the right side is to be understood in the sense of $\lim_{n\rightarrow\infty}\sum_{\Lambda\subseteq\Lambda_n: \;x,y\in \Lambda}\Vert\phi(\Lambda)\Vert$ with any sequence $\Lambda_n\uparrow\mathcal{L}$ and the limit is independent of the choice of the finite, increasing and exhausting $(\Lambda_n)_n$. In fact, it is easy to see that for any $ \Phi \in \mathcal{B}_F(\mathcal{L}) $ and $X\in P_0(\mathcal{L})$ the limit
				\begin{equation}
					\label{eq: sufrace energy limit}
					\sum_{\Lambda\in P_0(\mathcal{L}), \Lambda\cap X\ne\emptyset}\phi(\Lambda) \coloneqq \lim_{n\rightarrow\infty}\sum_{\substack{\Lambda\subseteq\Lambda_n:\\\Lambda\cap X\ne\emptyset}}\phi(\Lambda)
				\end{equation}
				exists in $\mathcal{U}$ and is independent of the sequence $\Lambda_n \uparrow \mathcal{L} $. \\

				The following is the content of \cite[Thm. 5.3]{LectureNotes} (see also \cite{nachtergaele2006propagation}) and \cite[Prop. 6.2.4]{BratelliRobinson2} reformulated for our set-up. 
				\begin{prop} 
			\label{th: time evolution}
			Let $\mathcal{U}$ be the $C^*$-algebra of quasi-local operators associated with $(\mathcal{L},D)$,  $F$ an $F$-function as in Definition \ref{def: F functions} and  $\Phi\in \mathcal{B}_F(\mathcal{L})$.
			\begin{enumerate}
				\item[(i)] For any $t\in\mathbb{R}$ and $A\in\mathcal{U}$ and any finite, increasing and exhausting sequence $\Lambda_n\uparrow\mathcal{L}$, the limit
				\begin{equation}
					\label{eq: interaction limit}
					\alpha_t(A) \coloneqq \lim_{n\rightarrow \infty}\exp{\Bigl(it\sum_{\Lambda\subseteq\Lambda_n}\phi(\Lambda)\Bigr)}A\exp{\Bigl(-it\sum_{\Lambda\subseteq\Lambda_n}\phi(\Lambda)\Bigr)}
				\end{equation}
                exists in $ \mathcal{U} $, is independent of the sequence $(\Lambda_n)_n$, and the convergence is uniform in $ t $ on compact subsets.
				The limit defines a strongly continuous 1-parameter group $(\alpha_t)_{t\in\mathbb{R}}$ of $*$-automorphisms of $\mathcal{U}$. 
				\item[(ii)] The infinitesimal generator $\delta$  of $\alpha$ includes the local algebra in its domain, $\mathcal{U}_{\mathrm{loc}}\subseteq D(\delta)$.  For any $A\in\mathcal{U}_{\mathrm{loc}}$ and any sequence $ \Lambda_n \uparrow \mathcal{L} $
				\begin{equation}
					\label{eq: time derivative}
					\delta(A)= i\sum_{\Lambda\in P_0(\mathcal{L})}[\phi(\Lambda),A] \coloneqq i\lim_{n\rightarrow \infty}\Bigl[\sum_{\Lambda\subseteq\Lambda_n}\phi(\Lambda),A\Bigr] . 
				\end{equation}
			\end{enumerate}
		\end{prop}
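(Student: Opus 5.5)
The statement is Proposition~\ref{th: time evolution}, the standard existence result for infinite-volume dynamics. I would prove it through the Lieb--Robinson bound for interactions in $\mathcal{B}_F(\mathcal{L})$, following the route of \cite{nachtergaele2006propagation,LectureNotes}.

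For finite $\Lambda_n$, write $H_n=\sum_{\Lambda\subseteq\Lambda_n}\phi(\Lambda)$ and $\alpha^{(n)}_t(A)=e^{itH_n}Ae^{-itH_n}$. The first step is the finite-volume Lieb--Robinson bound, uniform in $n$:
\begin{equation*}
\bigl\|[\alpha^{(n)}_t(A),B]\bigr\|\leq \frac{2\|A\|\|B\|}{C_F}\bigl(e^{2\|\Phi\|_F C_F|t|}-1\bigr)\sum_{x\in X}\sum_{y\in Y}F(D(x,y)),
\end{equation*}
for $A\in\mathcal{U}_X$ and $B\in\mathcal{U}_Y$ with $X,Y$ finite and disjoint. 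To get it, fix $B$ and set $C_B(X,t)=\sup_{A\in\mathcal{U}_X,\|A\|\le1}\|[\alpha^{(n)}_t(A),B]\|$. Differentiating $[\alpha^{(n)}_t(A),B]$ in $t$ and splitting off the terms $\phi(\Lambda)$ with $\Lambda\cap X\neq\emptyset$, as in the usual argument, gives the integral inequality
\begin{equation*}
C_B(X,t)\le C_B(X,0)+2\|A\|\sum_{\Lambda\cap X\ne\emptyset}\|\phi(\Lambda)\|\int_0^{|t|}C_B(\Lambda,s)\,ds.
\end{equation*}
Iterating this inequality and using the convolution property \eqref{convolution property} to sum each iterate, with the definition \eqref{eq: F norm definition} of $\|\Phi\|_F$, yields the exponential series above.

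The second step is the Cauchy property. For $n<n'$ and $A\in\mathcal{U}_X$ with $X$ finite, Duhamel's formula gives
\begin{equation*}
\alpha^{(n')}_t(A)-\alpha^{(n)}_t(A)=\int_0^t \frac{d}{ds}\Bigl(\alpha^{(n')}_s\alpha^{(n)}_{t-s}(A)\Bigr)ds=i\int_0^t\alpha^{(n')}_s\Bigl(\bigl[H_{n'}-H_n,\alpha^{(n)}_{t-s}(A)\bigr]\Bigr)ds.
\end{equation*}
The difference $H_{n'}-H_n$ is a sum of $\phi(\Lambda)$ with $\Lambda\not\subseteq\Lambda_n$. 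Each such $\Lambda$ contains a point $y\in\Lambda\setminus\Lambda_n$, so the Lieb--Robinson bound controls the norm of the integrand by a constant times
\begin{equation*}
e^{c|t|}\sum_{x\in X}\sum_{y\notin\Lambda_n}\sum_{z\in\mathcal{L}}\sum_{\Lambda\ni y,z}\|\phi(\Lambda)\|F(D(x,z))\le e^{c|t|}\|\Phi\|_F C_F\sum_{x\in X}\sum_{y\notin\Lambda_n}F(D(x,y)).
\end{equation*}
Since $\sum_yF(D(x,y))<\infty$ for each $x$ and $X$ is finite, this tail tends to $0$, uniformly for $t$ in compact sets. Hence $\alpha^{(n)}_t(A)$ converges on $\mathcal{U}_{\mathrm{loc}}$.

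Independence of the limit from the exhausting sequence follows by interlacing two sequences. Because each $\alpha^{(n)}_t$ is isometric, convergence extends to all of $\mathcal{U}$ by an $\varepsilon/3$ argument. The limit maps are then $*$-homomorphisms and isometric, and the group law $\alpha_{t+s}=\alpha_t\alpha_s$ follows from the finite-volume group law together with uniform equicontinuity. Each $\alpha_t$ is surjective, with inverse $\alpha_{-t}$. Strong continuity holds on $\mathcal{U}_{\mathrm{loc}}$ by the uniform convergence on compact $t$-sets, and extends to $\mathcal{U}$ by density.

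For (ii), take $A\in\mathcal{U}_{\mathrm{loc}}$ and let $\delta_n(A)=i[H_n,A]$. The limit $\delta(A)=i\lim_n[H_n,A]$ exists by \eqref{eq: sufrace energy limit}, since only terms with $\Lambda\cap\mathrm{supp}A\neq\emptyset$ contribute. One writes
\begin{equation*}
\alpha_t(A)-A=\lim_n\int_0^t\alpha^{(n)}_s(\delta_n(A))\,ds.
\end{equation*}
Using that $\delta_n(A)\to\delta(A)$ in norm, that the $\alpha^{(n)}_s$ are isometric and converge strongly, and dominated convergence, this becomes
\begin{equation*}
\alpha_t(A)-A=\int_0^t\alpha_s(\delta(A))\,ds.
\end{equation*}
Dividing by $t$ and letting $t\to0$ by strong continuity shows $A\in D(\delta)$ with the stated formula \eqref{eq: time derivative}.

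The main obstacle is the Lieb--Robinson iteration, specifically getting the convolution constant $C_F$ to close the series uniformly in $n$. Everything else is routine approximation.
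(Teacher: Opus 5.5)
Your proposal is correct and follows essentially the argument the paper relies on. The paper gives no proof of its own here but defers to \cite[Thm.~5.3]{LectureNotes} and \cite{nachtergaele2006propagation}, which proceed exactly as you do: a Lieb--Robinson bound uniform in the volume, a Duhamel/Cauchy estimate controlled by the tail $\sum_{y\notin\Lambda_n}F(D(x,y))$, extension by density and isometry, and the integrated identity $\alpha_t(A)-A=\int_0^t\alpha_s(\delta(A))\,ds$ for the generator. Two cosmetic points: the factor $\|A\|$ in your integral inequality for $C_B$ is redundant given the normalization $\|A\|\le 1$, and the terms with $\Lambda\cap X\neq\emptyset$ in the Cauchy step should be handled by the trivial bound $2\|A\|\|\phi(\Lambda)\|$ rather than the disjoint-support Lieb--Robinson bound, which still yields the same vanishing tail.
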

	
		\subsection{Sufficient conditions}
		
		In this subsection we establish some sufficient conditions for the assumptions of our main theorem to hold. We start by spelling out a sufficient condition on the $F$-function to satisfy the decay condition~\eqref{decay condition} in Assumption~\ref{as: interaction}.
		
		\begin{lemma}
			\label{th: lemma sufficient condition for (i)}
			Consider a countable metric space $(\mathcal{L}, D)$ that satisfies Assumption \ref{as: lattice} and a family of charge operators $\{n_z\vert\;z\in\mathcal{L}\}$ as in Assumption \ref{as: family of charge operators}. Given $\Phi\in \mathcal{B}_F(\mathcal{L})$ with an $F$-function that satisfies
			\begin{equation}
				\label{eq: stronger integrability contion}
				\sup_{x\in\mathcal{L}}\sum_{z\in\mathcal{L}}\Bigl(\sum_{z'\in B_{R_0}(z)\cap\mathcal{L}}F\bigl(\vert x-z'\vert\bigr)\Bigr)^{\frac{1}{2}}\vert x-z\vert^{k+1}<\infty
			\end{equation}
			with the constant $R_0$ from Assumption \ref{as: family of charge operators}, then~\eqref{decay condition} holds. 
		\end{lemma}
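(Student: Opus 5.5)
Fix $x\in\mathcal{L}$ and denote the inner sum in \eqref{decay condition} by
\[
S(\Lambda)=\sum_{z:\,\mathrm{supp}(n_z)\cap\Lambda\neq\emptyset}|z-x|^{k+1}.
\]
The plan is to bound $\sum_{\Lambda\ni x}\|\phi(\Lambda)\|\,S(\Lambda)^2$ uniformly in $x$.

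The first step expands the square as a double sum over $z,z'$ and exchanges it with the sum over $\Lambda$:
\[
\sum_{\Lambda\ni x}\|\phi(\Lambda)\|\,S(\Lambda)^2=\sum_{z,z'\in\mathcal{L}}|z-x|^{k+1}|z'-x|^{k+1}\sum_{\substack{\Lambda\ni x\\ \Lambda\cap\mathrm{supp}(n_z)\neq\emptyset\\ \Lambda\cap\mathrm{supp}(n_{z'})\neq\emptyset}}\|\phi(\Lambda)\|.
\]
All terms are nonnegative, so Tonelli justifies the exchange.

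The second step bounds the inner $\Lambda$-sum. Since $\mathrm{supp}(n_z)\subseteq B_{R_0}(z)$, each admissible $\Lambda$ contains $x$ and some $w\in B_{R_0}(z)\cap\mathcal{L}$, and also some $w'\in B_{R_0}(z')\cap\mathcal{L}$. A union bound then gives
\[
\sum_{\Lambda}\|\phi(\Lambda)\|\le \min\Bigl\{\sum_{w\in B_{R_0}(z)}\ \sum_{\Lambda\ni x,w}\|\phi(\Lambda)\|,\ \sum_{w'\in B_{R_0}(z')}\ \sum_{\Lambda\ni x,w'}\|\phi(\Lambda)\|\Bigr\}.
\]
Using the definition of $\|\Phi\|_F$, this is at most $\|\Phi\|_F\min\{G(z),G(z')\}$, where
\[
G(z)\coloneqq\sum_{w\in B_{R_0}(z)\cap\mathcal{L}}F(|x-w|).
\]

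The third step is the key trick, and it is the only nontrivial point: one has to see why the square root in \eqref{eq: stronger integrability contion} is the right quantity. The trick is $\min\{a,b\}\le\sqrt{ab}$ for $a,b\ge0$. Applied to the bound above, the whole expression becomes at most
\[
\|\Phi\|_F\Bigl(\sum_{z}G(z)^{1/2}|x-z|^{k+1}\Bigr)^2,
\]
since the double sum now factorizes. By \eqref{eq: stronger integrability contion} this is finite uniformly in $x$, which proves \eqref{decay condition}.

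Two points need only routine care. Sums over infinite families of $\Lambda$ are interpreted as monotone limits over $\Lambda_n\uparrow\mathcal{L}$, and since everything is nonnegative no convergence issues arise. I also use that $D$ is the Euclidean distance, so that $F(D(x,w))=F(|x-w|)$.
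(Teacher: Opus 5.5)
Your proof is correct and follows essentially the same route as the paper. You expand the square, bound the doubly-constrained $\Lambda$-sum by the geometric mean of the two singly-constrained sums (your $\min\{a,b\}\le\sqrt{ab}$ step), control each by $\|\Phi\|_F\sum_{z'\in B_{R_0}(z)\cap\mathcal{L}}F(|x-z'|)$ via the union bound, and factorize. The only cosmetic difference is that you apply the $F$-norm bound before taking the geometric mean, whereas the paper takes the square roots first.
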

		
		\begin{proof}
			For $\Lambda\in P_0(\mathcal{L})$ we abbreviate $E(\Lambda)\coloneqq\{z\in\mathcal{L} \ \vert \ \mathrm{supp}(n_z)\cap\Lambda\ne\emptyset\}$. 
			If  $\Lambda_n\uparrow\mathcal{L}$ and $x\in\mathcal{L}$ is fixed, we conclude from~\eqref{eq: sufrace energy limit}:
			\begin{align}
				\label{eq: appendix computation}
					\notag\sum_{\substack{\Lambda\in P_0(\mathcal{L}):\\x\in \Lambda}}\Vert\phi(\Lambda)\Vert&\Bigl(\sum_{\substack{z\in\mathcal{L}:\\\mathrm{supp}(n_z)\cap\Lambda\ne\emptyset}}\vert z-x\vert^{k+1}\Bigr)^2=\lim_{n\rightarrow\infty}\sum_{\substack{\Lambda\subseteq\Lambda_n:\\x\in \Lambda}}\Vert\phi(\Lambda)\Vert\sum_{y,z\in E(\Lambda)}\vert x-y\vert^{k+1}\vert x-z\vert^{k+1}\\
					\notag &=\lim_{n\rightarrow\infty}\sum_{y,z\in\mathcal{L}}\sum_{\substack{\Lambda\subseteq\Lambda_n:\\x\in\Lambda, y,z\in E(\Lambda)}}\Vert\phi(\Lambda)\Vert\vert x-y\vert^{k+1}\vert x-z\vert^{k+1}\\&\leq\lim_{n\rightarrow\infty}\sum_{y,z\in\mathcal{L}}\Bigl(\sum_{\substack{\Lambda\subseteq\Lambda_n:\\x\in\Lambda, y\in E(\Lambda)}}\Vert\phi(\Lambda)\Vert\Bigr)^{\frac{1}{2}}\vert x-y\vert^{k+1}\Bigl(\sum_{\substack{\Lambda\subseteq\Lambda_n:\\x\in\Lambda, z\in E(\Lambda)}}\Vert\phi(\Lambda)\Vert\Bigr)^{\frac{1}{2}}\vert x-z\vert^{k+1}.
				\end{align}
			By Definition \ref{as: family of charge operators} we have $\mathrm{supp}(n_z)\subseteq B_{R_0}(z)$ . Hence, for fixed $x,z\in\mathcal{L}$ by definition of $E(\Lambda)$:
			\begin{equation}
				\sum_{\substack{\Lambda\subseteq\Lambda_n:\\x\in\Lambda, z\in E(\Lambda)}}\Vert\phi(\Lambda)\Vert\leq\sum_{z'\in B_{R_0}(z)\cap\mathcal{L}} \, \sum_{\substack{\Lambda\subseteq\Lambda_n:\\x,z'\in \Lambda}}\Vert\phi(\Lambda)\Vert\leq \Vert\Phi\Vert_F\sum_{z'\in B_{R_0}(z)\cap\mathcal{L}} F(\vert x-z'\vert),
			\end{equation}
			where we used the definition~\eqref{eq: F norm definition} of the $F$-norm. 
			Combining this with \eqref{eq: appendix computation} yields 
			\begin{equation*}
					\sum_{\substack{\Lambda\in P_0(\mathcal{L}):\\x\in \Lambda}}   \Vert \phi(\Lambda)\Vert \Bigl(\sum_{\substack{z\in\mathcal{L}:\\\mathrm{supp}(n_z)\cap\Lambda\ne\emptyset}}\vert z-x\vert^{k+1}\Bigr)^2\\
					\leq 
                    \Vert\Phi\Vert_F\Biggl(\sum_{z\in\mathcal{L}}\Bigl(\sum_{z'\in B_{R_0}(z)\cap\mathcal{L}}F\bigl(\vert x-z'\vert\bigr)\Bigr)^{\frac{1}{2}}\vert x-z\vert^{k+1}\Biggr)^2.
			\end{equation*}
			Hence \eqref{eq: stronger integrability contion} guarantees~\eqref{decay condition}.  
		\end{proof}
		As a corollary, we show that fast enough polynomial decay is sufficient for the growth condition \eqref{decay condition} to hold true.
		\begin{cor} \label{cor: suff. cond for decay cond.}
			Suppose that Assumption \ref{as: lattice} is satisfied.
			 If $\lambda>4+2k+2\gamma$ and $F:[0,\infty)\rightarrow \mathbb{R}, r\mapsto (1+r)^{-\lambda}$ is an $F$-function for $(\mathcal{L}, D)$ with $\Phi \in \mathcal{B}_F(\mathcal{L}) $, then $\Phi$ satisfies the decay condition \eqref{decay condition}.
		\end{cor}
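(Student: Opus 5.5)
The plan is to apply Lemma~\ref{th: lemma sufficient condition for (i)} with $F(r)=(1+r)^{-\lambda}$. It then suffices to verify the integrability condition \eqref{eq: stronger integrability contion} uniformly in $x\in\mathcal{L}$. The key idea is to bound the inner sum over $z'\in B_{R_0}(z)\cap\mathcal{L}$ by a constant times $F(|x-z|)$, and then to reduce the outer sum to a lattice sum of the type controlled in the proof of Lemma~\ref{lem: suff. cond for decay cond.}.

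\emph{Step 1 (inner sum).} For $z'\in B_{R_0}(z)$, the triangle inequality gives $1+|x-z|\le 1+|x-z'|+R_0\le (1+R_0)(1+|x-z'|)$. Hence
\[
F(|x-z'|)\le (1+R_0)^{\lambda}(1+|x-z|)^{-\lambda}.
\]
Assumption~\ref{as: lattice} bounds the number of points: $|B_{R_0}(z)\cap\mathcal{L}|\le C\max(R_0,r_0)^\gamma=:K$, uniformly in $z$. Combining the two,
\[
\sum_{z'\in B_{R_0}(z)\cap\mathcal{L}}F(|x-z'|)\le K(1+R_0)^{\lambda}(1+|x-z|)^{-\lambda}.
\]

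\emph{Step 2 (outer sum).} Taking square roots, the summand in \eqref{eq: stronger integrability contion} is at most a constant times $(1+|x-z|)^{-\lambda/2}|x-z|^{k+1}\le (1+|x-z|)^{-(\lambda/2-k-1)}$. Set $\nu=\lambda/2-k-1$. The hypothesis $\lambda>4+2k+2\gamma$ is equivalent to $\nu>\gamma+1$. The shell-counting estimate \eqref{eq: lemma 2 summability estimate} then shows that $\sum_{z}(1+|x-z|)^{-\nu}$ is finite and bounded uniformly in $x$.

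This verifies \eqref{eq: stronger integrability contion}, and Lemma~\ref{th: lemma sufficient condition for (i)} yields \eqref{decay condition}. No step is a genuine obstacle. The only point needing care is making the constant in Step 1 uniform in $z$, which the uniform volume bound in Assumption~\ref{as: lattice} provides.
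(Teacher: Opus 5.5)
Your proposal is correct and is the same argument as the paper's: reduce via Lemma~\ref{th: lemma sufficient condition for (i)} to \eqref{eq: stronger integrability contion}, then check that condition using the uniform volume bound of Assumption~\ref{as: lattice} and the shell-counting estimate \eqref{eq: lemma 2 summability estimate}. The paper only calls this a ``straightforward computation''; your Steps 1--2 (bounding the inner sum by a constant times $(1+|x-z|)^{-\lambda}$, then summing with exponent $\nu=\lambda/2-k-1>\gamma+1$) supply exactly the details it leaves out.
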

		\begin{proof} By Lemma \ref{th: lemma sufficient condition for (i)} it is enough to show that
			\begin{equation}
				\sup_{x\in\mathcal{L}}\sum_{z\in\mathcal{L}}\Bigl(\sum_{z'\in B_{R_0}(z)\cap\mathcal{L}}\bigl(1+\vert x-z'\vert\bigr)^{-\lambda}\Bigr)^{\frac{1}{2}}\vert x-z\vert^{k+1}<\infty,
			\end{equation}
			where $R_0$ is the constraint on the support of the $n_z$ as defined in Assumption \ref{as: family of charge operators}. This is a straightforward computation using \eqref{as: lattice} and \eqref{eq: lemma 2 summability estimate}.
		\end{proof}
		
\section{Mermin-Wagner criterion} \label{sect:PfisterFröhlich}

    The Mermin-Wagner criterion in Proposition~\ref{th: Proposition 1} is known, yet somwhat hidden in the literature. It appeared in Fr\"ohlich-Pfister's proof \cite{PfisterPaper} of the absence of symmetry breaking for two-body translation-invariant potentials on $\mathbb{Z}^2$. A related entropy-based argument is in \cite{FannesEntropySSB,NachtergaeleMerminWagner} (see also \cite[Thm. 5.3.33]{BratelliRobinson2}). 
    For the convenience of the reader, we sketch the argument in \cite[Ch.~3]{PfisterPaper}. \\

  To prove that
		$	\omega \circ \tau = \omega $
		for all KMS-states $\omega \in K_\beta(\alpha)$ and the multipole symmetry $ \tau $, the key quantity to control is the relative entropy, which in our set-up is the monotone limit  of the state's restriction to the finite-dimensional subalgebras $ \mathcal{U}_{\Lambda_m} $ for any $ \Lambda_m \uparrow \mathcal{L} $:
        \begin{equation} \label{relative entropy as limit}
           0\leq  S(\omega \vert \omega\circ \tau) = \lim_{m\rightarrow\infty} S\left( (\omega \vert_{\mathcal{U}_{\Lambda_m}}) \vert ((\omega \circ \tau) \vert_{\mathcal{U}_{\Lambda_m}}) \right) = \sup_{m \in \mathbb{N}} S\left( (\omega \vert_{\mathcal{U}_{\Lambda_m}}) \vert ((\omega \circ \tau) \vert_{\mathcal{U}_{\Lambda_m}}) \right) ,
        \end{equation}
        see~\cite[Prop. 5.23]{OhyaPetzQuantumEntropy}. 
        The fact that the limit is monotone is known as Uhlman's monotonicity \cite[Thm. 5.3]{OhyaPetzQuantumEntropy}. A key fact is the following criterion.
        \begin{lemma}\label{lem:fact1}
            If $ S(\omega \vert \omega \circ \tau)<\infty$ for all $\omega \in K_\beta(\alpha) $, then $ \omega = \omega \circ \tau$ for all $\omega \in K_\beta(\alpha) $.
        \end{lemma}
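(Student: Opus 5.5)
The plan follows Fröhlich--Pfister: reduce to extremal KMS states, and show that two \emph{distinct} extremal KMS states cannot have finite relative entropy.

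\textbf{Step 1 (extremal case).} Let $\omega$ be an extremal element of $K_\beta(\alpha)$. By the remark after Definition~\ref{def: symmetries}, $\omega\circ\tau\in K_\beta(\alpha)$. Since $\tau$ is an automorphism commuting with $\alpha$, the map $\omega\mapsto\omega\circ\tau$ is an affine bijection of $K_\beta(\alpha)$ with inverse $\omega\mapsto\omega\circ\tau^{-1}$. Hence $\omega\circ\tau$ is again extremal. Extremal $\beta$-KMS states are factor states (\cite[Thm.~5.3.30]{BratelliRobinson2}), so two distinct ones are disjoint. Suppose, for contradiction, that $\omega\ne\omega\circ\tau$. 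Disjointness gives a central projection $P$ in the von Neumann algebra $\pi(\mathcal{U})''$ of the representation $\pi=\pi_\omega\oplus\pi_{\omega\circ\tau}$ with $\bar\omega(P)=1$ and $\overline{\omega\circ\tau}(P)=0$, where the bar denotes the normal extension.

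\textbf{Step 2 (localization).} By Kaplansky's density theorem and density of $\mathcal{U}_{\mathrm{loc}}$, I approximate $P$ strongly by local elements $0\le A_j\le\mathbb{I}$ with $A_j\in\mathcal{U}_{\Lambda_{m_j}}$. This yields
\[
\omega(A_j)\to1,\qquad (\omega\circ\tau)(A_j)\to 0 .
\]
Now apply monotonicity of relative entropy under unital completely positive maps \cite{OhyaPetzQuantumEntropy}. The relevant map is the measurement channel $\mathbb{C}^2\to\mathcal{U}_{\Lambda_{m_j}}$, $(c_1,c_2)\mapsto c_1A_j+c_2(\mathbb{I}-A_j)$. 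Together with~\eqref{relative entropy as limit}, it gives
\[
S(\omega\vert\omega\circ\tau)\ \ge\ p_j\log\frac{p_j}{q_j}+(1-p_j)\log\frac{1-p_j}{1-q_j},
\qquad p_j=\omega(A_j),\ q_j=(\omega\circ\tau)(A_j).
\]
The right-hand side tends to $+\infty$ as $p_j\to1$ and $q_j\to0$. This contradicts the finiteness hypothesis, so $\omega\circ\tau=\omega$ for all extremal $\omega$.

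\textbf{Step 3 (general KMS states).} $K_\beta(\alpha)$ is a weak$^*$-compact Choquet simplex (\cite[Thm.~5.3.30]{BratelliRobinson2}), so every $\omega\in K_\beta(\alpha)$ is the barycenter of a probability measure $\mu$ concentrated on the extremal KMS states. In the separable case the extremal points form a Borel set; in general one uses the maximal-measure formulation. For each $A\in\mathcal{U}$ the evaluation $\omega'\mapsto\omega'(\tau(A))$ is weak$^*$-continuous and affine, so
\[
\omega(\tau(A))=\int\omega'(\tau(A))\,d\mu(\omega')=\int\omega'(A)\,d\mu(\omega')=\omega(A),
\]
where the middle equality uses Step~1 for $\mu$-almost every $\omega'$.

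\textbf{Main obstacle.} I expect the delicate point to be Step~2: passing from abstract disjointness to local observables. The approximating $A_j$ must work simultaneously for both states, which is why I approximate inside the \emph{direct sum} representation. The entropy bound must also use the same exhausting sequence as in~\eqref{relative entropy as limit}; this is fine because the relative entropy there is a supremum over all $m$. As a fallback I would cite the standard fact that Araki's relative entropy is $+\infty$ unless the support of $\omega$ is dominated by that of $\omega\circ\tau$, which fails for disjoint states.
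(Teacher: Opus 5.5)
Your proposal is correct and takes essentially the paper's route. You reduce to extremal KMS states, observe that $\omega\circ\tau$ is again extremal, so that $\omega\neq\omega\circ\tau$ would make the two disjoint and hence force infinite relative entropy, and then pass to arbitrary elements of $K_\beta(\alpha)$ via its convex/simplex structure. Your Steps~2 and~3 just make explicit what the paper compresses into citations of \cite[Thm.~5.3.30]{BratelliRobinson2}: the Kaplansky plus two-outcome-measurement argument showing that disjoint states have infinite entropy, and the barycentric decomposition. For the latter, Krein--Milman applied to the weak$^*$-closed convex set of $\tau$-invariant KMS states would already suffice. One small correction: distinct extremal KMS states are disjoint not merely because they are factor states, but because quasi-equivalent KMS factor states must coincide, which is also part of that structure theorem.
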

        \begin{proof}
            By \cite[Thm.~5.3.30]{BratelliRobinson2} the finiteness of the relative entropy entails that $\omega=\omega\circ \tau$ for all extremal KMS-states $ \omega $. Note that symmetries of $\alpha$ leave $K_\beta(\alpha)$ invariant \cite[Prop.~5.3.33]{BratelliRobinson2} and therefore $\omega\circ \tau \in K_\beta(\alpha) $ is again extremal.
                As the set of KMS-states is convex \cite[Thm.~5.3.30]{BratelliRobinson2}, this implies that precomposing by $\tau$ leaves any KMS-states invariant. 
        \end{proof}

        In the setting of Proposition~\ref{th: Proposition 1}, the symmetry $ \tau $ is approximated by a sequence $ \tau_m : \mathcal{U} \ni A \mapsto U_m^* A U_m $ generated by local unitaries $(U_m)_{m\in\mathbb{N}}\subseteq\mathcal{U}_{\mathrm{loc}}$ corresponding to $ \Lambda_m \uparrow \mathcal{L} $ such that for all $m\in\mathbb{N}$
			\begin{equation} \label{local unitaries}
				\tau(A)=\tau_m(A) \;\mathrm{for}\;\mathrm{all}\; A\in\mathcal{U}_{\Lambda_m}.
			\end{equation}
        This implies that the right side in~\eqref{relative entropy as limit} is bounded according to
			\begin{equation}\label{relative entropy bdd}
		S\Bigl((\omega\vert_{\mathcal{U}_{\Lambda_m}})\Big\vert \bigl((\omega\circ\tau)\vert_{\mathcal{U}_{\Lambda_m}}\bigr)\Bigr)=S\Bigl((\omega\vert_{\mathcal{U}_{\Lambda_m}})\Big\vert \bigl((\omega\circ\tau_m)\vert_{\mathcal{U}_{\Lambda_m}}\bigr)\Bigr)\leq S(\omega\vert \omega\circ\tau_m),
			\end{equation}
			where the inequality is Uhlman's monotonicity. 
            In order to further estimate $ S(\omega\vert \omega\circ\tau_m) $, one represents  $ \omega \circ \tau_m   $ as a perturbation of the original $ \omega \in K_\beta(\alpha)$, see Lemma~\ref{lem:perturbo} below.\\ 

        To briefly review the notion of perturbation of a KMS-state, let $(\mathcal{H}, \pi, \Omega)$ be the GNS-triple associated to $ \omega $ on $ \mathcal{U}$, i.e., $\mathcal{H}$ is a Hilbert space, $\pi: \mathcal{U}\rightarrow L(\mathcal{H})$ is a multiplicative and additive map from the quasi-local algebra to the bounded operators on $\mathcal{H}$ such that for all $A\in \mathcal{U}$
        \begin{equation}
            \pi(A^*)=\pi(A)^*, \qquad \omega(A)=\langle \Omega, \pi(A)\Omega\rangle. 
        \end{equation}
        (For an overview of the GNS-construction, see \cite[Sec. 7.1]{LectureNotes} and \cite[Ch. 2.5]{naaijkens2017quantum}.) Given the time-evolution~$ \alpha $ on $ \mathcal{U}$,   there exists a self-adjoint operator,  $H:\mathcal{H}\supseteq D(H)\rightarrow\mathcal{H}$, called the GNS-Hamiltonian. The unitary group $(e^{itH})_{t\in\mathbb{R}}$ implements the time evolution $\alpha$ in the GNS-representation
		\begin{equation}
		\pi\bigl(\alpha_t(A)\bigr)=e^{itH}\pi(A)e^{-itH}\;\;\mathrm{for}\;\mathrm{all}\;t\in\mathbb{R},\;A\in\mathcal{U}.
		\end{equation}
		For a self-adjoint element $W\in \mathcal{U}$, one then shows \cite[Thm. 5.4.4]{BratelliRobinson2} that $\Omega\in D\bigl(e^{-\frac{\beta}{2}(H-\pi(W))}\bigr)$ and
				\begin{equation} \label{pertubed dynamics}
					\omega^W:\mathcal{U}\ni A\mapsto \bigl\langle e^{-\frac{\beta}{2}(H-\pi(W))}\Omega,\pi(A)e^{-\frac{\beta}{2}(H-\pi(W))}\Omega\bigr\rangle
				\end{equation}
				defines non-negative linear functional on $\mathcal{U}$ (not necessarily normalized). Furthermore, $\omega^W$ satisfies the $\beta$-KMS condition for a perturbed time-evolution $\alpha^W$ on $ \mathcal{U} $, which is constructed via a Dyson series \cite[Thm. 5.4.4]{BratelliRobinson2} and satisfies 
                \begin{equation}
                \pi(\alpha_t^W(A)) = e^{it(H-\pi(W))}\pi(A)e^{-it(H-\pi(W))} \;\;\mathrm{for}\;\mathrm{all}\;t\in\mathbb{R},\;A\in\mathcal{U}.
                \end{equation}
                Finally, we recall from \cite[Prop. 6.2.32]{BratelliRobinson2}   that 
            \begin{equation} \label{entropy to state}
                S(\omega\vert\omega^W) = - \beta\omega(W) 
            \end{equation}
            for all self-adjoint $ W \in \mathcal{U}$.\\

        We apply the above construction with two choices:\\

        \noindent
       1.~The original KMS-state $ \omega \in K_\beta(\alpha) $ and $ W $ equal to 
        \begin{equation}\label{delta to hm}
            h_m \coloneqq  \sum_{\substack{\Lambda \in P_0(\mathcal{L}) : \\ \Lambda \cap Q_m \neq 0}} \phi(\Lambda) ,
        \end{equation}
        where  $Q_m$ is the (finite) support of $U_m$, which also coincides with the support of $U_m^*$. The term $ h_m $ is self-adjoint and bounded, cf.~\eqref{eq: sufrace energy limit}. It represents the total interaction associated with $ Q_m $. Removing this term results in a state $ \omega^{h_m} $, which is 
        tracial on $ \mathcal{U}_{Q_m} $ and hence invariant under $ \tau_m $. This is the content of
        \begin{lemma}\label{lem: helper}
    With $ \omega \in K_\beta(\alpha) $ and $ h_m $, $ m \in \mathbb{N} $, from \eqref{delta to hm}:
    \begin{equation} \label{eq:tracial}
    \omega^{h_m}=\omega_{\mathrm{tr}}\otimes(\omega^{h_m}|_{\mathcal{U}_{\mathcal{L}\setminus Q_m}}), \quad \mbox{where}\quad  \omega_\mathrm{tr}:\mathcal{U}_{Q_m}\ni A\mapsto \frac{\mathrm{tr}(A)}{\mathrm{tr}(\mathbb{I}_{\mathcal{U}_{Q_m}})} .
    \end{equation}
    Consequently, $  \omega^{h_m} =  \omega^{h_m} \circ \tau_m $.
\end{lemma}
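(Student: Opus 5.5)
The plan is to check that $\omega^{h_m}$ is a (non-normalized) $\beta$-KMS functional for a perturbed dynamics $\alpha^{h_m}$ under which $\mathcal{U}_{Q_m}$ evolves trivially and does not interact with its complement. Then we read off the product structure from the KMS condition.

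First I identify the perturbed dynamics. Formally $\alpha^{h_m}$ is generated by $\sum_\Lambda \phi(\Lambda) - h_m = \sum_{\Lambda\cap Q_m=\emptyset}\phi(\Lambda)$, i.e.\ by the interaction $\Phi^{(m)}$ obtained from $\Phi$ by deleting every term touching $Q_m$. To make this rigorous I would note that $\Phi^{(m)}\in\mathcal{B}_F(\mathcal{L})$, so Proposition~\ref{th: time evolution} gives a dynamics $\alpha^{(m)}$. Its generator on $\mathcal{U}_{\mathrm{loc}}$ is $\delta - i[h_m,\cdot\,]$, using \eqref{eq: time derivative} and \eqref{eq: sufrace energy limit}. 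This is exactly the generator of the Dyson-series perturbation $\alpha^{h_m}$ from \cite[Thm.~5.4.4]{BratelliRobinson2}. Since $\mathcal{U}_{\mathrm{loc}}$ is a core, we conclude $\alpha^{h_m}=\alpha^{(m)}$. The finite-volume approximants $\sum_{\Lambda\subseteq\Lambda_n,\Lambda\cap Q_m=\emptyset}\phi(\Lambda)$ lie in $\mathcal{U}_{\mathcal{L}\setminus Q_m}$ and thus commute with $\mathcal{U}_{Q_m}$. Hence $\alpha^{h_m}_t(A)=A$ for $A\in\mathcal{U}_{Q_m}$, and $\alpha^{h_m}_t$ maps $\mathcal{U}_{\mathcal{L}\setminus Q_m}$ into itself.

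Next, for $A\in\mathcal{U}_{Q_m}$ and any $B\in\mathcal{U}$, $A$ is $\alpha^{h_m}$-analytic with $\alpha^{h_m}_{i\beta}(A)=A$. The KMS condition (after normalizing $\omega^{h_m}$, or directly for the positive functional) then gives $\omega^{h_m}(BA)=\omega^{h_m}(AB)$. We need this only for analytic $B$, and it extends to all $B$ by density and continuity. Fix $C\in\mathcal{U}_{\mathcal{L}\setminus Q_m}$ and define $\psi_C(A)\coloneqq\omega^{h_m}(AC)$ on the full matrix algebra $\mathcal{U}_{Q_m}$. Since $C$ commutes with $\mathcal{U}_{Q_m}$, we get $\psi_C(A_1A_2)=\omega^{h_m}(A_1A_2C)=\omega^{h_m}(A_2CA_1)=\omega^{h_m}(A_2A_1C)=\psi_C(A_2A_1)$. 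So $\psi_C$ is a tracial functional on a full matrix algebra, and by uniqueness of the trace $\psi_C=\psi_C(\mathbb{I})\,\omega_{\mathrm{tr}}$. This yields $\omega^{h_m}(AC)=\omega_{\mathrm{tr}}(A)\,\omega^{h_m}(C)$ for all such $A$ and $C$. Products $AC$ span a dense subspace of $\mathcal{U}$, so \eqref{eq:tracial} follows.

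Invariance is then immediate. Since $U_m\in\mathcal{U}_{Q_m}$, the map $\tau_m$ acts trivially on $\mathcal{U}_{\mathcal{L}\setminus Q_m}$ and gives $\omega_{\mathrm{tr}}(U_m^*AU_m)=\omega_{\mathrm{tr}}(A)$. Therefore $\omega^{h_m}(\tau_m(AC))=\omega_{\mathrm{tr}}(U_m^*AU_m)\,\omega^{h_m}(C)=\omega^{h_m}(AC)$, and density finishes the argument.

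I expect the main obstacle to be the first step: rigorously identifying the Dyson-series dynamics $\alpha^{h_m}$ with the dynamics of the truncated interaction, and checking that the KMS relation transfers to the unnormalized $\omega^{h_m}$ with the required analyticity. I would handle this by comparing generators on the core $\mathcal{U}_{\mathrm{loc}}$.
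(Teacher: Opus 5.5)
Your proof is correct, but it takes a different route from the paper at the tracial step.

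The paper fixes a \emph{positive} $B\in\mathcal{U}_{\mathcal{L}\setminus Q_m}$. It invokes faithfulness of $\omega^{h_m}$ (simplicity of $\mathcal{U}$ plus \cite[Thm.~5.3.10]{BratelliRobinson2}) to normalize $A\mapsto\omega^{h_m}(AB)$ into a tracial \emph{state} on $\mathcal{U}_{Q_m}$. It then appeals to uniqueness of the tracial state and reaches general $B$ by decomposing into positive parts. You instead note that $\psi_C$ is a tracial \emph{linear functional} on the full matrix algebra $\mathcal{U}_{Q_m}$. Any such functional is a multiple of the trace, because commutators span the traceless matrices. This handles every $C$ at once and dispenses with faithfulness, positivity and the decomposition step, so it is more elementary. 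You also justify two points the paper takes for granted: that $\alpha^{h_m}_t(A)=A$ for $A\in\mathcal{U}_{Q_m}$, and that the KMS relation extends from analytic elements to all of $\mathcal{U}$ by density and continuity.

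For the first of these points, your detour through the truncated interaction $\Phi^{(m)}$ and the claim that $\mathcal{U}_{\mathrm{loc}}$ is a core is more than you need. Moreover, the core property is not supplied by Proposition~\ref{th: time evolution}, which only gives $\mathcal{U}_{\mathrm{loc}}\subseteq D(\delta)$. It suffices to argue as follows:
\begin{itemize}
\item The generator of $\alpha^{h_m}$ is $\delta-i[h_m,\cdot\,]$ on $D(\delta)$.
\item For $A\in\mathcal{U}_{Q_m}$, \eqref{eq: time derivative} and \eqref{eq: sufrace energy limit} give $\delta(A)=i[h_m,A]$, since every term with $\Lambda\cap Q_m=\emptyset$ commutes with $A$.
\item Hence $\frac{d}{dt}\alpha^{h_m}_t(A)=\alpha^{h_m}_t\bigl(\delta(A)-i[h_m,A]\bigr)=0$.
\end{itemize}
The invariance of $\mathcal{U}_{\mathcal{L}\setminus Q_m}$ under $\alpha^{h_m}$, which you also derive, is never used.
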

\begin{proof}
By \cite[Thm. 5.4.4]{BratelliRobinson2} we know that $\omega^{h_m}\in K_\beta(\alpha^{h_m})$. Since the quasi-local algebra $\mathcal{U}$ is simple \cite[Cor. 2.6.19]{bratteliRobinson1},   $\omega^{h_m} $ is faithful \cite[Thm. 5.3.10]{BratelliRobinson2}. In particular,   $\omega^{h_m}(B) >0 $ for any positive, non-zero $B\in\mathcal{U}_{\mathcal{L}\setminus Q_m}$. Hence 
       \begin{equation}
				\omega^{h_m}_B:\mathcal{U}_X\ni A\mapsto \bigl(\omega^{h_m}(B)\bigr)^{-1}\omega^{h_m}(AB)
			\end{equation}
			is a well-defined state on $\mathcal{U}_{Q_m}$. We now consider $A_1 , A_2 \in \mathcal{U}_{Q_m} $. Then $\alpha^{h_m}_t(A_1)=A_1$ for all $t\in\mathbb{R}$ and,  since  this is constant, the identity analytically extends to $t\in\mathbb{C}$. Applying the KMS-condition for $ \omega^{h_m}$ with respect to $ \alpha^{h_m} $ we hence conclude
            \begin{align*}    \omega^{h_m}_B(A_1A_2)=&\bigl(\omega^{h_m}(B)\bigr)^{-1}\omega^{h_m}(A_1A_2B) =\bigl(\omega^{h_m}(B)\bigr)^{-1}\omega^{h_m}\bigl(A_2B\alpha^{h_m}_{i\beta}(A_1)\bigr) \notag \\ =&\bigl(\omega^{h_m}(B)\bigr)^{-1}\omega^{h_m}(A_2A_1B)
                =\omega^{h_m}_B(A_2A_1).
				\end{align*}
			For the last equality we have used that $[A_1,B]=0$ due to disjoint supports. Therefore, $\omega_B^{h_m}$ is a tracial state, and since $\vert Q_m\vert <\infty$, the unique such state is $\omega_{\mathrm{tr}}$ \cite[Lemma IV.4.1]{Simon+1993}.
		      Thus, for all positive $B\in\mathcal{U}_{\mathcal{L}\setminus Q_m}$ and all $A\in\mathcal{U}_{Q_m}$ we have 
            $$\omega^{h_m}(A\otimes B)=\omega^{h_m}(AB)=\omega_{\mathrm{tr}}(A)\omega^{h_m}(B).$$ 
            This extends to all $ B\in\mathcal{U}_{\mathcal{L}\setminus Q_m} $ by decompositions into positive parts, which completes the proof of~\eqref{eq:tracial}.

          Since $Q_m $ is the support of $ U_m $, the cyclicity of the trace and~\eqref{eq:tracial} imply $\omega^{h_m}\circ\tau_m=\omega^{h_m}$.
            \end{proof}
\medskip
\noindent
2.~With $ \omega $ replaced by $ \omega^{h_m} $ and $ W= - h_m $ from \eqref{delta to hm}. 
Let $(\mathcal{H}_m, \pi_m, \Omega_m)$ stand for the GNS-triple of the positive functional $\omega^{h_m} $. In this representation, the automorphism $ \tau_m $ is unitarily implemented by $V_m \coloneqq \pi_m(U_m) $ and the invariance $ \omega^{h_m} = \omega^{h_m}\circ \tau_m $ from Lemma~\ref{lem: helper} translates to $ V_m \Omega_m = \Omega_m $. Likewise, since $\alpha^{h_m}_t(U_m)=U_m$ for all $t\in\mathbb{R}$, the unitary time-evolution of the self-adjoint GNS-Hamiltonian $H_m:\mathcal{H}_m\supseteq D(H_m)\rightarrow \mathcal{H}_m$ leaves $ V_m $ invariant, 
            \begin{equation}
				\label{eq: V commutes with dynamics}
				V_m=\pi_m(U_m)=\pi(\alpha_t^{h_m}(U_m))=e^{itH_m}V_me^{-itH_m}\;\; \mathrm{for}\;\mathrm{all}\; t\in\mathbb{R}. 
			\end{equation}
            Moreover, $\pi_m(h_m)$ is bounded on $\mathcal{H}$ and thus 
			\begin{equation}
				\label{eq: V and pi}
				V_me^{it\pi_m(h_m)}V^*_m=\exp{\bigl(itV_m\pi_m(h_m)V^*_m\bigr)}=\exp{\bigl(it\pi_m(U_m h_m U_m^*)\bigr)}.
			\end{equation}
          Moreover, the Trotter product formula is applicable \cite{trotter1959product} and yields
				\begin{align*}
					V_m\exp{\bigl(it(H_m+\pi_m(h_m)\bigr)}\Omega_m 
					&=\lim_{n\rightarrow \infty}V_m\bigl(e^{\frac{it}{n}H_m}e^{\frac{it}{n}\pi_m(h_m)}\bigr)^n\Omega_m =\lim_{n\rightarrow \infty}V_m\bigl(e^{\frac{it}{n}H_m}e^{\frac{it}{n}\pi_m(h_m)}\bigr)^nV^*_m\Omega_m\\
					&=\lim_{n\rightarrow \infty}\bigl(e^{\frac{it}{n}H_m}e^{\frac{it}{n}\pi_m(U_mh_mU_m^*)}\bigr)^n\Omega_m=\exp{\bigl(it(H_m+\pi_m(U_mh_mU_m^*)\bigr)}\Omega_m.
				\end{align*}
            The last line is based on \eqref{eq: V commutes with dynamics} and \eqref{eq: V and pi}, and again the Trotter product formula.
			It is then standard procedure (cf.~\cite[Thm IV.5.5]{Simon+1993}) to use the KMS-condition to analytically continue this expression to $t=i\beta/2$ and obtain
			\begin{equation}
				\label{eq: commuting V past the exponential}
				V_me^{- \frac{\beta}{2}(H_m+\pi_m(h_m))}\Omega_m=e^{-\frac{\beta}{2}(H_m+\pi_m(U_mh_mU_m^*))}\Omega_m.
			\end{equation}
            With these preparations, we establish the crucial explicit form of the symmetry-transformed KMS-state.
            \begin{lemma}\label{lem:perturbo}
                With $ \omega \in K_\beta(\alpha) $ and $ h_m $, $ m \in \mathbb{N} $, from \eqref{delta to hm}:
    \begin{equation} \label{eq:perts}
        \omega \circ \tau_m = \omega^{h_m-U_mh_mU_m^*}, \qquad \omega \circ \tau_m^{-1} = \omega^{h_m-U_m^*h_mU_m} .
    \end{equation}
            \end{lemma}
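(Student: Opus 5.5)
We have to prove $\omega\circ\tau_m = \omega^{h_m-U_mh_mU_m^*}$. The plan is to use the two perturbation steps set up above. First, the perturbation of $\omega$ by $h_m$ is undone by perturbing $\omega^{h_m}$ by $-h_m$. Then the GNS-identity \eqref{eq: commuting V past the exponential} moves the unitary $V_m$ past the exponential.

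\textbf{Step 1 (inverting the perturbation).} We show that $\omega = (\omega^{h_m})^{-h_m}$, understood in the GNS-representation $(\mathcal{H}_m,\pi_m,\Omega_m)$ of $\omega^{h_m}$. For this we invoke the composition rule for perturbations of KMS-states from \cite[Thm.~5.4.4]{BratelliRobinson2}: $(\omega^{W_1})^{W_2}=\omega^{W_1+W_2}$ for self-adjoint $W_1,W_2\in\mathcal{U}$, as unnormalized functionals, and $\omega^0=\omega$. This gives
\begin{equation*}
\omega(A)=\bigl\langle e^{-\frac{\beta}{2}(H_m+\pi_m(h_m))}\Omega_m,\pi_m(A)e^{-\frac{\beta}{2}(H_m+\pi_m(h_m))}\Omega_m\bigr\rangle ,
\end{equation*}
where $H_m$ is the GNS-Hamiltonian of $\alpha^{h_m}$. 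If normalizations differ by a positive constant, we fix this by comparing both sides at $A=\mathbb{I}$; $\omega$ is already a state.

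\textbf{Step 2 (moving the unitary).} Insert $\tau_m(A)=U_m^*AU_m$ into this representation. Since $\pi_m(U_m^*AU_m)=V_m^*\pi_m(A)V_m$, we get
\begin{equation*}
\omega\circ\tau_m(A)=\bigl\langle V_m e^{-\frac{\beta}{2}(H_m+\pi_m(h_m))}\Omega_m,\pi_m(A)V_me^{-\frac{\beta}{2}(H_m+\pi_m(h_m))}\Omega_m\bigr\rangle .
\end{equation*}
By \eqref{eq: commuting V past the exponential}, the vector on each side equals $e^{-\frac{\beta}{2}(H_m+\pi_m(U_mh_mU_m^*))}\Omega_m$. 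This identifies $\omega\circ\tau_m$ as the perturbation of $\omega^{h_m}$ by $-U_mh_mU_m^*$. Applying the composition rule again gives $(\omega^{h_m})^{-U_mh_mU_m^*}=\omega^{h_m-U_mh_mU_m^*}$, where the outer perturbation of $\omega^{h_m}$ is read in the GNS-space of $\omega^{h_m}$.

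\textbf{Step 3 (the inverse).} The identity for $\tau_m^{-1}(A)=U_mAU_m^*$ follows in the same way with $U_m$ replaced by $U_m^*$. Lemma~\ref{lem: helper} and the derivation of \eqref{eq: commuting V past the exponential} apply verbatim to $U_m^*$: it has the same support $Q_m$, gives $V_m^*\Omega_m=\Omega_m$, and is invariant under $\alpha^{h_m}$.

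\textbf{Main obstacle.} The delicate point is the composition rule in Step 1 together with normalization. We must make sure that perturbing the unnormalized functional $\omega^{h_m}$ via its own GNS-data reproduces $\omega$ exactly, not merely up to a positive scalar. If \cite[Thm.~5.4.4]{BratelliRobinson2} only yields equality up to normalization, we argue as follows. Both sides are states: $\omega\circ\tau_m$ is a state, and equality up to a positive scalar forces $\omega^{h_m-U_mh_mU_m^*}(\mathbb{I})=1$. Consequently the lemma holds either literally or after normalization. The normalized version is what \eqref{entropy to state} then uses.
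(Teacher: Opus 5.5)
Your proposal is correct and is essentially the paper's proof: you write $\omega=(\omega^{h_m})^{-h_m}$ in the GNS data of $\omega^{h_m}$, use \eqref{eq: commuting V past the exponential} to move $V_m$ through the exponential, recombine via $(\omega^{h_m})^{-U_mh_mU_m^*}=\omega^{h_m-U_mh_mU_m^*}$, and treat $\tau_m^{-1}$ by replacing $U_m$ with $U_m^*$. The normalization detour is unnecessary, because the composition rule holds exactly for the unnormalized functionals in \eqref{pertubed dynamics}, so $\omega^{h_m-U_mh_mU_m^*}(\mathbb{I})=\omega\circ\tau_m(\mathbb{I})=1$ comes out automatically; your fallback inference is also wrong, since a positive scalar discrepancy $c$ would only give $c\,\omega^{h_m-U_mh_mU_m^*}(\mathbb{I})=1$, not $\omega^{h_m-U_mh_mU_m^*}(\mathbb{I})=1$.
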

            \begin{proof}
			Writing $\omega=(\omega^{h_m})^{-h_m}$ and using the definition~\eqref{pertubed dynamics} of the perturbed functional, we conclude for all $ A \in \mathcal{U} $
			\begin{align*}
			\omega\circ\tau_m(A)& =\langle e^{-\frac{\beta}{2}(H_m-\pi_m(-h_m))}\Omega_m, V^*_m\pi_m(A)V_me^{-\frac{\beta}{2}(H_m-\pi_m(-h_m))}\Omega_m\rangle\\
					&=\Bigl\langle \exp{\Bigl(-\frac{\beta}{2}\bigl(H_m-\pi_m(-U_mh_mU_m^*)\bigr)\Bigr)}\Omega_m , \pi_m(A)\exp{\Bigl(-\frac{\beta}{2}\bigl(H_m-\pi_m(-U_mh_mU_m^*)\bigr)\Bigr)}\Omega_m\Bigr\rangle\\
					&=(\omega^{h_m})^{-U_mh_mU_m^*}(A) =\omega^{h_m-U_mh_mU_m^*}(A).
				\end{align*}
            The second equality followed from \eqref{eq: commuting V past the exponential} and the two remaining equalities are applications of \eqref{pertubed dynamics} again. This proves the first equality in \eqref{eq:perts}. The second one follows similarly. 
            \end{proof}

        We are now in the position to finalize the argument for Proposition~\ref{th: Proposition 1}. 
        \begin{proof}[Proof of Proposition~\ref{th: Proposition 1}]
        From~\eqref{relative entropy as limit} combined with~\eqref{relative entropy bdd} we conclude
             \begin{align}
                   S(\omega\vert\omega\circ\tau_m)
                   &\leq S(\omega\vert\omega\circ\tau_m) + S(\omega\vert\omega\circ\tau_m^{-1}) \notag \\
					&=- \beta\omega(h_m-U_mh_mU_m^*)-\beta\omega(h_m-  U_m^*h_mU_m)
            \end{align}
           where the equality is due to~\eqref{eq:perts} and~\eqref{entropy to state}. Conversely, \eqref{eq: time derivative} implies
            \begin{align} \label{delta to hm0}
            -i\bigl(U_m\delta(U_m^*)+U_m^*\delta(U_m)\bigr) 
               &= \lim_{n\rightarrow \infty} \sum_{\Lambda \subseteq \Lambda_n } \left( [U_m [\phi(\Lambda), U_m^*] + U_m^* [\phi(\Lambda), U_m] \right) \notag \\
                &=\lim_{n\rightarrow \infty} \sum_{\substack{\Lambda \subseteq \Lambda_n : \\ \Lambda \cap Q_m \neq 0}} \left( U_m \phi(\Lambda) U_m^* -\phi(\Lambda) + U_m^* \phi(\Lambda) U_m -\phi(\Lambda) \right) \notag \\
                &=U_m h_m U_m^* -h_m +  U_m^* h_m U_m - h_m . 
            \end{align}
        Hence $S(\omega\vert\omega\circ\tau_m) \leq \sup_{m\in\mathbb{N}}\bigl\vert \omega\bigl(U_m\delta(U_m^*)+U_m^*\delta(U_m)\bigr)\bigr\vert $, which concludes the proof of~\eqref{eq: bound expectation value of U_m delta U_m^* not appendix}. The rest of the assertion follows from Lemma~\ref{lem:fact1}. 
        \end{proof}

        \section{Generalization: Mermin-Wagner on a slab} \label{sect:Slab}
		In this Appendix we prove a variant of our main theorem. As a motivation, let us first consider the following example
		\begin{align*}
			\mathcal{L}=\mathbb{Z}^l\times\{1,...,M_{l+1}\}\times...\times\{1,...,M_{d}\}
		\end{align*}
		for some $M_j\in\mathbb{N}$. Then the effective dimension is $ l $, i.e., for $m\geq \max_j M_j$, $$\vert \mathcal{L}\cap[-m,m]^d\vert =(2m+1)^l\cdot \prod_{j=1}^{d-l} M_{l+j} , $$ 
		and therefore $\tau^{(a)}$ cannot be spontaneously broken for $l\leq2(k-\vert a\vert+1)$ if the other assumptions of Theorem~\ref{Thm:Main} hold. This illustrates that the $l$-dimensional "slab" $\mathbb{Z}^l\times\{1,...,M_{l+1}\}\times...\times\{1,...,M_{d}\}$ in $\mathbb{R}^d$ with finite thickness in the other $d-l$ spatial directions behaves effectively like an $l$-dimensional system for the Mermin-Wagner theorem. However, Assumption~\ref{as: interaction} of Theorem \ref{Thm:Main} still needs the symmetries $\tau^{(a)}$ for $a\in\mathbb{N}_0^d$ and not just $\mathbb{N}_0^l \times \{0\}^{d-l}$. 
		We can however modify the Mermin-Wagner theorem for the spontaneous breaking of $\tau^{(a)}$ with $b_{l+1}=...=b_d=0$ so to only require the presence of the multipole symmetries $\tau^{(a)}$ with $a_{l+1}=...=a_d=0$. We stress that one does not require that $\mathcal{L}$ can be realized as a Cartesian product of a square lattice and a finite set. Rather we have the following variant of Theorem~\ref{Thm:Main}.
		\begin{theorem}
			\label{th: second Mermin Wagner}
			 Suppose there is a subset $I\subseteq\{1,...,d\}$ and constants $M_j\geq0$ for $j\in I$ such that $\vert x_{j}\vert\leq \vert M_j\vert$ for all $x\in\mathcal{L}$ and all $j\in I$. Fix $k\in \mathbb{N}_0$ and assume that Assumptions \ref{as: lattice}, \ref{as: family of charge operators} hold. Furthemore, assume that Assumption \ref{as: interaction} hold, with \eqref{as:invariance} valid only for every multi-index $a\in \mathbb{N}_0^d$ with $\vert a\vert\leq k$ and $a_j=0$ for every $j\in I$. 
			 
			 Fix some inverse temperature $\beta\in \mathbb{R}$. If $a\in \mathbb{N}_0^d$ is a fixed multi-index with $\vert a\vert\leq k$ and $\gamma \leq 2(k-\vert a \vert +1)$, then the multipole symmetry $(\tau_s^{(a)})_{s\in\mathbb{R}}$ is preserved on every $\beta$-KMS state $\omega$ for $\beta\in\mathbb{R}$; that is $\omega\circ\tau_s^{(a)}=\omega$ for all $s\in\mathbb{R}$ and $\omega\in K_\beta(\alpha)$. 
		\end{theorem}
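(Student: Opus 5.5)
The plan is to follow the proof of Theorem~\ref{Thm:Main} verbatim, replacing the first part of Proposition~\ref{th: Proposition main Taylor construction} by its second part. We take $\Lambda_m=[-m+R_0,m-R_0]^d\cap\mathcal{L}$ and $U_m=U_m^{(a)}(s)$ from \eqref{eq: definition of U_m}. Proposition~\ref{th: definition of the multipole symmetries} uses no symmetry assumption, so it still gives $\tau^{(a)}_s(A)=U_m^*AU_m$ for $A\in\mathcal{U}_{\Lambda_m}$. The decay condition \eqref{decay condition 2} still yields the dynamics $\alpha$ via Proposition~\ref{th: time evolution}. Once \eqref{eq: bound expectation value of U_m delta U_m^* not appendix} is established, Proposition~\ref{th: Proposition 1}, whose proof in Appendix~\ref{sect:PfisterFröhlich} uses the symmetry only through the local unitaries $U_m$, gives $\omega\circ\tau^{(a)}_s=\omega$ for all $\omega\in K_\beta(\alpha)$. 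The case $\beta=0$ is covered by Remark~\ref{rem: no ssb for beta=0}.

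For $a$ with $a_j=0$ for all $j\in I$, i.e.\ $a\in T_k$, the bound is exactly the second part of Proposition~\ref{th: Proposition main Taylor construction}. There, \eqref{eq: certain partial derivatives vanish} kills every Taylor monomial involving a coordinate in $I$. Hence only the weakened invariance \eqref{as: invariance weakened} is needed to cancel all exponentials except the remainder term, and the estimate $|Q_m|/m^{2(k-|a|+1)}=\mathcal{O}(1)$ follows from $\gamma\le 2(k-|a|+1)$.

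For a general $a$ with $|a|\le k$ and $a_I:=(a_j)_{j\in I}\neq0$, I would factor $z^a\chi_m(z)=z_I^{a_I}\,\bigl(z_{I^c}^{a_{I^c}}\chi_m(z)\bigr)$. For $m>\max_{j\in I}M_j$, $\chi_m$ does not depend on $z_I$ on $\mathcal{L}$. Since $|z_I|$ is bounded, I would Taylor expand only the second factor in the $I^c$-variables around $x_{I^c}$, up to order $k-|a_{I^c}|$. The remainder is then $\mathcal{O}\bigl(m^{|a_{I^c}|-(k+1)+|a_I|}\bigr)|x-z|^{k+1}$, so the same power counting as in \eqref{eq: taylor for f_m} goes through.

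The main obstacle is the cancellation step \eqref{eq: second step}. The polynomial part now consists of monomials $z_I^{a_I}z_{I^c}^{b}$, which carry nonzero $I$-exponents. Invariance of $\phi(\Lambda)$ under the corresponding $\tau^{(a_I,b)}$ is not covered by the weakened hypothesis. I would first try to rewrite $z_I^{a_I}$ on the bounded range as a polynomial combination absorbed into charges already covered by \eqref{as: invariance weakened}. If that fails, I would need to show directly that $\tau^{(a)}$ commutes with $\alpha$, which is also needed for $\omega\circ\tau^{(a)}$ to be a KMS state in Lemma~\ref{lem:fact1}. I expect this step to require an additional argument, or possibly to restrict the statement to $a\in T_k$.
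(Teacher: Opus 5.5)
For $a\in T_k$, i.e.\ $a_j=0$ for all $j\in I$, your argument is exactly the paper's proof. The paper combines Proposition~\ref{th: Proposition 1} with the second part of Proposition~\ref{th: Proposition main Taylor construction}, using the same $\Lambda_m$ and $U_m=U_m^{(a)}(s)$ as in the proof of Theorem~\ref{Thm:Main}.

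One correction: the proof of Proposition~\ref{th: Proposition 1} does not use $\tau^{(a)}$ only through the $U_m$. Lemma~\ref{lem:fact1} needs $\omega\circ\tau_s^{(a)}\in K_\beta(\alpha)$, i.e.\ that $\tau^{(a)}$ commutes with $\alpha$. For $a\in T_k$ this follows from the proof of Proposition~\ref{th: stronger symmetry assumption}. That proof only uses invariance of each $\phi(\Lambda)$ under $\tau^{(a)}$ itself, which is contained in \eqref{as: invariance weakened}.

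Do not try to handle $a$ with $a_j\neq 0$ for some $j\in I$. The paper's proof does not cover this case either, since the second part of Proposition~\ref{th: Proposition main Taylor construction} is restricted to $T_k$. The paragraph preceding the theorem also says that only $\tau^{(a)}$ with $a_{l+1}=\dots=a_d=0$ are treated. Outside $T_k$ the conclusion is actually false, so the obstacle you found cannot be closed by any argument.

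Here is a counterexample. Take:
\begin{itemize}
\item $\mathcal{L}=\mathbb{Z}\times\{0,1\}$, so $\gamma=1$, $I=\{2\}$, $M_2=1$;
\item spin-$\tfrac12$ sites with $n_x=\sigma^z_x$, and $k=1$;
\item only rung interactions $\phi(\{(j,0),(j,1)\})=\sigma^+_{(j,0)}\sigma^-_{(j,1)}+\sigma^-_{(j,0)}\sigma^+_{(j,1)}$.
\end{itemize}
These terms have finite range. They are invariant under $\tau^{(0)}$ and under $\tau^{(e_1)}$, because both sites of a rung have the same first coordinate. That is all that $T_1=\{0,e_1\}$ demands.

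For $a=e_2$ we have $|a|=1\le k$ and $\gamma\le 2=2(k-|a|+1)$. Yet
\[
\tau^{(e_2)}_s\bigl(\sigma^+_{(j,0)}\sigma^-_{(j,1)}\bigr)=e^{2is}\,\sigma^+_{(j,0)}\sigma^-_{(j,1)} .
\]
The product of the rung Gibbs states is a $\beta$-KMS state, and for $\beta\neq 0$ it gives
\[
\omega\bigl(\sigma^+_{(0,0)}\sigma^-_{(0,1)}\bigr)=-\sinh\beta/(2+2\cosh\beta)\neq 0 .
\]
Hence $\omega\circ\tau^{(e_2)}_s\neq\omega$ for $s\notin\pi\mathbb{Z}$. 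In this example $\tau^{(e_2)}$ is not even a symmetry of $\alpha$, so Lemma~\ref{lem:fact1} could not be applied in any case.

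Your fallback of restricting the statement to $a\in T_k$ is the correct resolution. With that restriction your proposal is complete and coincides with the paper's argument.
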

		
		\begin{proof}
			Theorem \ref{th: second Mermin Wagner} can be proven in the same fashion as Theorem \ref{Thm:Main} by combining Proposition \ref{th: Proposition 1} and the second part of Proposition \ref{th: Proposition main Taylor construction}. 
		\end{proof}
		
		Since we do not require any translational invariance of $\mathcal{L}$, Theorems \ref{Thm:Main} and \ref{th: second Mermin Wagner} can in particular be applied to lattice models of Moiré materials or similar constructions where one stacks several 2D "slices" on top of each other in a not necessarily translationally invariant way.

		\paragraph{Acknowledgements.} 
		This work was supported by the DFG under grant TRR 352--Project-ID 470903074. SW was also supported under EXC-2111 -- 390814868.

		\bibliography{References}
		\bibliographystyle{plain}
		
	\end{document}